\theoremstyle{plain}
\newtheorem{thm}{Theorem}[section]
\newtheorem{lem}[thm]{Lemma}
\newtheorem{cor}[thm]{Corollary}
\newtheorem{dfn}[thm]{Definition}
\newtheorem{prop}[thm]{Proposition}
\newtheorem{rmk}[thm]{Remark}
\newtheorem{ex}[thm]{Example}
\def\K{\mathcal{K}}
\def\M{\mathcal{M}}
\def\d{\mathrm{d}}
\def\h{\mathrm{h}}
\def\Cset{\mathbb{C}}
\def\Kset{\mathbb{K}}
\def\Lset{\mathbb{L}}
\def\Nset{\mathbb{N}}
\def\Rset{\mathbb{R}}
\def\Zset{\mathbb{Z}}
\def\GL{\mathrm{GL}}
\def\SL{\mathrm{SL}}
\def\id{\mathrm{id}}
\def\Im{\mathrm{Im}\,}
\def\Re{\mathrm{Re}\,}
\def\epsilon{\varepsilon}
\DeclareMathOperator{\sech}{sech}
\DeclareMathOperator{\tr}{tr}
\def\theequation{\arabic{section}.\arabic{equation}}
\begin{document}


\title[Integrability of the Zakharov-Shabat systems]%
{Integrability of the Zakharov-Shabat systems\\ 
by quadrature}

\author{Kazuyuki Yagasaki}

\address{Department of Applied Mathematics and Physics, Graduate School of Informatics,
Kyoto University, Yoshida-Honmachi, Sakyo-ku, Kyoto 606-8501, JAPAN}
\email{yagasaki@amp.i.kyoto-u.ac.jp}

\date{\today}
\subjclass[2010]{34M03,34M15,34M35,35P25,35Q51,37K15,37K40}
\keywords{inverse scattering transform; Zakharov-Shabat system; integrability;
 differential Galois theory; reflectionless potential}

\begin{abstract}
We study the integrability of the general two-dimensional \emph{Zakharov-Shabat systems},
 which appear in application of the \emph{inverse scattering transform} (IST)
 to an important class of nonlinear partial differential equations (PDEs)
 called \emph{integrable systems},
 in the meaning of \emph{differential Galois theory}, i.e., their solvability by quadrature.
It becomes a key for obtaining analytical solutions to the PDEs by using the IST.
For a wide class of potentials,
 we prove that they are integrable in that meaning
 if and only if the potentials are reflectionless.
It is well known that for such potentials
 particular solutions called \emph{$n$-solitons} in the original PDEs
 are yielded by the IST.
\end{abstract}

\maketitle


\section{Introduction}

The \emph{inverse scattering transform} (IST) is a powerful tool
 to solve the initial value problems
 for an important class of nonlinear partial equations (PDEs) called \emph{integrable systems}
 such as the Korteweg-de\,Vries (KdV) equation and nonlinear Schr\"{o}dinger (NLS) equation
 \cite{A11,AKNS73,AKNS74,APT04,AS81,L80,N85}.
In application of the technique,
 eigenvalue problems for linear systems of ordinary differential equations (ODEs)
 called the \emph{Zakharov-Shabat (ZS) systems} need to be solved.
Here we are interested in the question whether their solutions can be obtained by quadrature.
Such solvability of linear ODEs can be determined
 by \emph{differential Galois theory} \cite{CH11,PS03},
 which is an extension of \emph{classical Galois theory} for algebraic equations to linear ODEs.
A linear ODE is said to be \emph{integrable} in the meaning of differential Galois theory
 if its all solutions can be obtained by quadrature.
The differential Galois theory was also utilized
 to develop a useful tool called the Morales-Ramis theory \cite{AZ10,M99,MR01}
 for determining the nonintegrability of nonlinear ODEs.
Some relations between nonintehgrability and chaotic dynamics
 in two-degree-of-freedom Hamiltonian systems
 were described in \cite{MP99,Y03,YY19} based on the Morales-Ramis theory.
Moreover, the differential Galois theory was used to discuss
 bifurcations of homoclinic orbits in four-dimensional ODEs \cite{BY12a}
 and a Sturm-Liouville problem of second-order ODEs on the infinite interval \cite{BY12b}.

In this paper we study the integrability of the two-dimensional ZS systems,
\begin{equation}
v_x=
\begin{pmatrix}
-ik & q(x)\\
-1 & ik
\end{pmatrix}v,\quad
v\in\Cset^2,
\label{eqn:ZS1}
\end{equation}
and
\begin{equation}
v_x=
\begin{pmatrix}
-ik & q(x)\\
r(x) & ik
\end{pmatrix}v,
\label{eqn:ZS2}
\end{equation}
in the meaning of differential Galois theory, i.e., their solvability by quadrature,
 where the subscript $x$ represents differentiation with respect to the variable $x$,
 and $k\in\Cset$ is a constant.
Here the independent variable $x$ is originally defined in $\Rset$
 but its domain is a little extended later.
Moreover, the \emph{potentials} $q(x),r(x)$ are assumed to
 satisfy the following condition:

\begin{enumerate}
\setlength{\leftskip}{-1.5em}
\item[\bf(A)]
The potentials $q(x),r(x)$ are holomorphic in a neighborhood $U$ of $\Rset$ in $\Cset$.
Moreover, there exist holomorphic functions $q_\pm,r_\pm:U_0\to\Cset$ such that
 $q_\pm(0),r_\pm(0)=0$ and
\[
q(x)=q_\pm(e^{\mp\lambda_\pm x}),\quad
r(x)=r_\pm(e^{\mp\lambda_\pm x})
\]
for $|\Re x|$ sufficiently large,
 where $U_0$ is a neighborhood of the origin in $\Cset$,
 $\lambda_\pm\in\Cset$ are some constants with $\Re\lambda_\pm>0$,
 and the upper or lower sign is taken simultaneously
 depending whether $\Re x>0$ or $\Re x<0$.
\end{enumerate}
For the ZS system \eqref{eqn:ZS1} condition~(A) has a meaning only for $q(x)$.
Especially, $q(x),r(x)$ tend to zero as $x\to\pm\infty$ on $\Rset$ and $q,r\in L^1(\Rset)$
 if they satisfy condition~(A).
For example, if $q(x),r(x)$ are rational functions of $e^{\lambda x}$
 for some $\lambda\in\Cset$ with $\Re\lambda>0$,
 have no singularity on $\Rset$, and $q(x),r(x)\to 0$ as $x\to\pm\infty$,
 then condition~(A) holds.
We have another class of functions satisfying condition~(A) as follows.

\begin{rmk}
\label{rmk:1a}
Let $f(\xi)$ is a second- or higher-order polynomial of $\xi\in\Rset$
 such that for some $\xi_-<\xi_+$ $f(\xi_\pm)=0$,  $f(\xi)>0$ on $[\xi_-,\xi_+]$ and
\[
f_\xi(\xi_-)>0,\quad
f_\xi(\xi_+)<0.
\]
Then there exists a heteroclinic solution $\xi^\h(x)$ to the one-dimensional ODE
\begin{equation}
\xi_x=f(\xi)
\label{eqn:rmk1a}
\end{equation}
such that $\lim_{x\to\pm\infty}\xi^\h(x)=\xi_\pm$,
 where the upper or lower sign is taken simultaneously.
Since the complexification of \eqref{eqn:rmk1a} is holomorphically equivalent
 to the linearized ODE
\[
\xi_x=f_\xi(\xi_\pm)\xi
\]
near neighborhoods of $\xi=\xi_\pm$ in $\Cset$
 $($e.g., Theorem~$5.5$ in Chapter~I of {\rm\cite{IY08})},
 we see that $q(x)=\xi_x^\h(x)$ satisfies condition~{\rm(A)}
 with $\lambda_\pm=\mp f_\xi(\xi_\pm)$. 
\end{rmk}

It is well known that the ZS systems \eqref{eqn:ZS1} and \eqref{eqn:ZS2}
 appear in application of the IST for the following fundamental and important nonlinear PDEs
 (see, e.g., \cite{AKNS73,AKNS74} or Section~1.2 of \cite{AS81}):
\begin{itemize}
\setlength{\leftskip}{-2.7em}
\item
The KdV equation
\begin{equation}
q_t+6qq_x+q_{xxx}=0;
\label{eqn:KdV}
\end{equation}
\item
The NLS equation
\begin{equation}
iq_t=q_{xx}\pm 2|q|^2q
\label{eqn:NLS}
\end{equation}
with $r=\mp q^\ast$;
\item
The modified KdV (mKdV) equation
\begin{equation}
q_t\pm 6q^2q_x+q_{xxx}=0
\label{eqn:mKdV}
\end{equation}
with $r=\mp q$;
\item
The sine-Gordon
\begin{equation}
u_{xt}=\sin u
\label{eqn:sineG}
\end{equation}
with $-q=r=\frac{1}{2}u_x$;
\item
The sinh-Gordon
\begin{equation}
u_{xt}=\sinh u
\label{eqn:sinhG}
\end{equation}
with $q=r=\frac{1}{2}u_x$.
\end{itemize}
Here $q,r$ and $u$ are assumed to depend on the time variable $t$ as well as $x$,
 and the superscript `$\ast$' represents complex conjugate.
The ZS system of the form \eqref{eqn:ZS1} appears only for the KdV equation \eqref{eqn:KdV},
 and $q,r$ and $u$ are also assumed to be real
 except for the NLS equation \eqref{eqn:NLS}.
Under the transformation $(x+t,x-t)\mapsto(x,t)$,
 the sine- and sinh-Gordon equations \eqref{eqn:sineG} and \eqref{eqn:sinhG}
 are changed to
\[
u_{tt}-u_{xx}+\sin u=0\quad\mbox{and}\quad
u_{tt}-u_{xx}+\sinh u=0,
\]
respectively, in the physical coordinate system.

Here we prove that the ZS system \eqref{eqn:ZS1} (resp. \eqref{eqn:ZS2})
 is integrable in the meaning of differential Galois theory 
 if and only if the potential $q(x)$ is
 $($resp. the potentials $q(x),r(x)$ are$)$ \emph{reflectionless}.
See Section~2 for the precise statement of the result
 along with the definition of reflectionless potentials.
As stated above, the integrability of \eqref{eqn:ZS1} and \eqref{eqn:ZS2}
 in that meaning implies that they are solved by quadrature.
See Section~3 for its more precise definition.
It is also well known for the above five examples that
 when the potentials are reflectionless,
 the ZS systems \eqref{eqn:ZS1} and \eqref{eqn:ZS2} are analytically solved 
 and particular solutions called \emph{$n$-solitons} in the original PDEs
 are yielded from the potentials by the IST \cite{A11,APT04} (see also Section~4).
Our result means that the ZS systems \eqref{eqn:ZS1} and \eqref{eqn:ZS2}
 are solved by quadrature
 only in such a case under condition~(A).
The ZS system~\eqref{eqn:ZS1} is transformed to a linear Schr\"odinger equation,
 as stated in Section~2.
Its integrability in the meaning of differential Galois theory
 was also discussed in \cite{AMW11} for several classes of potentials
 which do not necessarily satisfy condition~(A)
 and in \cite{BY12b} for a special potential which satisfies condition~(A).

The outline of this paper is as follows:
In Section~2 we state our main results along with necessary terminologies and setting,
 and give two examples to illustrate the results.
We provide necessary information on differential Galois theory in Section~3
 and on scattering coefficients and reflectionless potentials in Section~4.
We also need some relations on scattering and reflection coefficients
 between the ZS system~\eqref{eqn:ZS1} and the corresponding linear Schr\"odinger equation,
 which are given in Appendix~A.
We prove the main theorems in Sections~5 and 6.


\section{Main Results}
In this section we give our main results.
Following the standard theory of the ISF (e.g., \cite{A11,APT04}) with slight modifications,
 we first define some necessary terminologies for its statement.
 
Assume that $k\neq 0$.
Taking $x\to\pm\infty$ in \eqref{eqn:ZS1} and \eqref{eqn:ZS2}, we have
\begin{equation}
v_x=
\begin{pmatrix}
-ik & 0\\
r_0 & ik
\end{pmatrix}v,
\label{eqn:ZS0}
\end{equation}
where $r_0=-1$ in \eqref{eqn:ZS1} and $r_0=0$ in \eqref{eqn:ZS2}.
Equation~\eqref{eqn:ZS0} has
\begin{equation}
\Phi(x;k)=T\begin{pmatrix}
e^{-ikx} & 0\\
0 & e^{ikx}
\end{pmatrix}T^{-1}
\label{eqn:Phi}
\end{equation}
as a fundamental matrix such that $\Phi(0)=\id_2$,
 where $\id_2$ denotes the  $2\times 2$ identity matrix and
\[
T=\begin{pmatrix}
1 & 0\\[1ex]
\displaystyle\frac{ir_0}{2k}
 & 1
\end{pmatrix}.
\]
Let $v=\phi(x;k),\bar{\phi}(x;k),\psi(x;k),\bar{\psi}(x;k)$
 be solutions to  \eqref{eqn:ZS1} or \eqref{eqn:ZS2} such that
\begin{equation}
\begin{split}
&
\phi(x;k)\sim T
\begin{pmatrix}
1\\
0
\end{pmatrix}e^{-ikx},\quad
\bar{\phi}(x;k)\sim T
\begin{pmatrix}
0\\
1
\end{pmatrix}e^{ikx}\quad\mbox{as $x\to-\infty$},\\
&
\psi(x;k)\sim T
\begin{pmatrix}
0\\
1
\end{pmatrix}e^{ikx},\quad
\bar{\psi}(x;k)\sim T
\begin{pmatrix}
1\\
0
\end{pmatrix}e^{-ikx}\quad\mbox{as $x\to+\infty$.}
\end{split}
\label{eqn:bc}
\end{equation}
These solutions are called the \emph{Jost solutions}
 and their existence for $k\neq 0$ will be shortly shown in Section~6.
Since $v=\psi(x;k),\bar{\psi}(x;k)$ are linearly independent solutions
 to \eqref{eqn:ZS1} and \eqref{eqn:ZS2},
 there exist constants $a(k),\bar{a}(k),b(k),\bar{b}(k)$ such that
\begin{equation}
\begin{split}
\phi(x;k)=& b(k)\psi(x;k)+a(k)\bar{\psi}(x;k),\\
\bar{\phi}(x;k)=&\bar{a}(k)\psi(x;k)+\bar{b}(k)\bar{\psi}(x;k).
\end{split}
\label{eqn:ab1}
\end{equation}
We refer to the constants $a(k),\bar{a}(k),b(k),\bar{b}(k)$ as \emph{scattering coefficients}.
When $a(k),\bar{a}(k)\neq 0$, the constants
\[
\rho(k)=b(k)/a(k),\quad
\bar{\rho}(k)=\bar{b}(k)/\bar{a}(k)
\]
are defined and called the \emph{reflection coefficients} for \eqref{eqn:ZS1} and \eqref{eqn:ZS2}.
If $\rho(k),\bar{\rho}(k)=0$ for any $k\in\Rset\setminus\{0\}$,
 then $q(x),r(x)$ are called \emph{reflectionless potentials}.

In the standard IST for the KdV equation \eqref{eqn:KdV},
 the linear Schr\"{o}dinger equation
\begin{equation}
w_{xx}+(k^2+q)w=0
\label{eqn:ZS1a}
\end{equation}
 is used instead of \eqref{eqn:ZS1}
 and the scattering and reflection coefficients are defined for \eqref{eqn:ZS1a}.
See Appendix~A for the relations on these  coefficients
 between \eqref{eqn:ZS1} and \eqref{eqn:ZS1a}.
Note that  the first and second component of $v$ in \eqref{eqn:ZS1} are given by
\begin{equation}
v_1=-w_{x}+ikw,\quad
v_2=w
\label{eqn:v}
\end{equation}
in \eqref{eqn:ZS1}.

We now state the first of our main results.

\begin{thm}
\label{thm:2a}
Suppose that $q(x)$ is $($resp. $q(x),r(x)$ are$)$  reflectionless
 and satisfies $($resp. satisfy$)$ condition~{\rm(A)}.
Then $q(x)$ is a rational function
 $($resp. $q(x),r(x)$ are  rational functions$)$ of $e^{\lambda x}$,
 where $\lambda\in\Cset$ is some constant with $\Re\lambda>0$.
Moreover, the ZS system \eqref{eqn:ZS1} $($resp. \eqref{eqn:ZS2}$)$,
 which is regarded as a linear system of differential equations over $\Cset(e^{\lambda x})$,
 is integrable in the meaning of differential Galois theory,
 i.e., it is solved by quadrature, for $k\in\Cset\setminus\{0\}$.
\end{thm}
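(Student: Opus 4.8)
The statement has two halves, and the plan is to treat them in turn: first, that a reflectionless potential obeying condition~(A) must be a rational function of a single exponential $e^{\lambda x}$; second, that for such potentials the ZS system is solved by quadrature.

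For the first half I would use the explicit description of reflectionless potentials to be recalled in Section~4 (for \eqref{eqn:ZS1} via the Schr\"odinger equation \eqref{eqn:ZS1a} and the correspondence of scattering data in Appendix~A). When $\rho\equiv\bar\rho\equiv 0$ on $\Rset\setminus\{0\}$, condition~(A) makes the scattering coefficients holomorphic in a neighbourhood of $\Rset$, so that $b(k)$ and $\bar b(k)$ vanish identically; the inverse problem then degenerates to a finite-dimensional system of linear algebraic equations, and $q(x)$ (resp. $q(x),r(x)$) is reconstructed as a ratio of quasi-polynomials in $x$, that is, as a rational function of finitely many exponentials $e^{-\nu_\ell x}$ whose exponents $\nu_\ell$ are $\Zset_{\ge 0}$-combinations of those attached to the discrete eigenvalues. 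Condition~(A) now enters decisively. Near $+\infty$ the reconstructed potential is $q_+(e^{-\lambda_+ x})$ with $q_+$ holomorphic at $0$ and $q_+(0)=0$, hence a convergent power series in $e^{-\lambda_+ x}$; so every exponent $\nu_\ell$ that actually occurs must lie in $\lambda_+\Zset_{>0}$, and the analogous comparison at $-\infty$ gives $\nu_\ell\in\lambda_-\Zset_{>0}$. Therefore the eigenvalue exponents are all commensurable with $\lambda_+$, the relevant determinant is a polynomial in $e^{-\lambda_+ x}$, and $q$ (resp. $q,r$) is a rational function of $e^{\lambda x}$ for a suitable submultiple $\lambda$ of $\lambda_+$ with $\Re\lambda>0$, which we fix small enough that, as used below, the Jost solutions too are rational in $e^{\lambda x}$ up to an exponential factor. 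I expect this commensurability step --- passing from the two a~priori unrelated rates $\lambda_\pm$ to one exponential --- to be the main obstacle; in particular one must verify that an exponent of every bound state survives with nonzero coefficient in $q$.

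For the second half, set $K=\Cset(e^{\lambda x})$ with the derivation $\d/\d x$, whose field of constants is $\Cset$. The reconstruction also produces $\psi(x;k)$ and $\bar\psi(x;k)$ explicitly: up to the factors $e^{ikx}$ and $e^{-ikx}$ respectively, their entries lie in $K$. Their Wronskian is a nonzero constant for every $k\in\Cset\setminus\{0\}$, so $\{\psi,\bar\psi\}$ is a fundamental system of the ZS system over $K$. Since $\psi$ has a component that is not identically zero, $e^{ikx}$ lies in the field generated over $K$ by $\psi$ and $\bar\psi$, whence the Picard--Vessiot extension of the ZS system over $K$ is $K(e^{ikx})$. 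But $e^{ikx}=\exp\!\big(\int ik\,\d x\big)$ is the exponential of an integral of $ik\in K$, so $K(e^{ikx})/K$ is Liouvillian --- obtained by a single exponential --- and its differential Galois group is a subgroup of $\Cset^\ast$, finite cyclic if $ik/\lambda\in\mathbb{Q}$ and all of $\Cset^\ast$ otherwise, in any case abelian. By the criterion recalled in Section~3, the identity component of the differential Galois group of the ZS system over $K$ is solvable, so the system is integrable in the meaning of differential Galois theory, i.e. solved by quadrature, for every $k\in\Cset\setminus\{0\}$. For \eqref{eqn:ZS1} the same conclusion follows by reading $v$ off from a solution $w$ of \eqref{eqn:ZS1a} through \eqref{eqn:v}, since $w$ is again $e^{\pm ikx}$ times an element of $K$.
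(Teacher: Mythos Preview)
Your overall plan coincides with the paper's: use the explicit reflectionless formulas from Section~4 to write $q$ (resp.\ $q,r$) as rational functions of the finitely many exponentials $e^{2ik_jx}$ (and $e^{2i\bar k_jx}$), invoke condition~(A) to force commensurability, and then read off from \eqref{eqn:ZS1sol} or \eqref{eqn:ZS2sol} that the Jost solutions lie in $\Cset(e^{\lambda x})\cdot e^{\pm ikx}$, so the Picard--Vessiot extension over $K=\Cset(e^{\lambda x})$ is the single exponential extension $K(e^{ikx})$. Your second half is essentially the paper's argument and is fine.

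The gap you yourself flag is real and is exactly where the work lies. From condition~(A) you can only conclude that the exponents \emph{actually appearing in $q$} are integer multiples of some $\lambda$; this is the paper's Lemma~\ref{lem:5a} (resp.\ \ref{lem:5c}). But the Jost formulas \eqref{eqn:ZS1sol}, \eqref{eqn:ZS2sol} involve \emph{all} the $\hat N_j$ (resp.\ $N_j,\bar N_j$), so to place $\psi,\bar\psi$ in $K\cdot e^{\pm ikx}$ you need every $k_j$ (and $\bar k_j$) to lie in $i\lambda\Zset$, not merely the ones visible in $q$. Your proposed fix---check that each bound state leaves a nonzero trace in $q$---is not the route the paper takes, and it would require its own argument (cancellations among the $\hat C_j\hat N_j$ are not excluded a~priori).

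The paper closes this gap differently, in Lemma~\ref{lem:5b} (resp.\ \ref{lem:5d}), by a local Frobenius argument. Once $q$ is known to be rational in $e^{\lambda x}$, substitute $s=e^{\lambda x}$; the transformed equation \eqref{eqn:lem5b} (resp.\ \eqref{eqn:lem5d}) has regular singularities at $s=0$ and $s=\infty$ with indicial exponents $\pm ik_j/\lambda$. The bound-state function $\hat N_j(x)=\hat\psi(x;k_j)e^{ik_jx}$ (resp.\ $N_j,\bar N_j$) decays at both ends by \eqref{eqn:N1a} (resp.\ \eqref{eqn:N2a}), so the corresponding solution must pick the exponent $+\rho_j$ near $s=0$ and $-\rho_j$ near $s=\infty$. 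If $\rho_j=ik_j/\lambda$ were not an integer these two local branches could not match (the solution would carry the factor $s^{\rho_j}$ globally, contradicting the behaviour at the other end), forcing $k_j\in i\lambda\Nset$ (and similarly $\bar k_j\in -i\lambda\Nset$). This yields $\hat N_j\in\Cset(e^{\lambda x})$ for every $j$, and then \eqref{eqn:ZS1sol}, \eqref{eqn:ZS2sol} give the Jost solutions in $K\cdot e^{\pm ikx}$ as you need. With this lemma in hand, your Liouvillian conclusion is exactly the paper's.
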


Theorem~\ref{thm:2a} is proved in Section~5.

\begin{rmk}
For \eqref{eqn:ZS1}
 we can take $\lambda\in\Rset$ in Theorem~$\ref{thm:2a}$.
See Section~$4.2.1$.
\end{rmk}

\begin{figure}[t]
\includegraphics[scale=0.5]{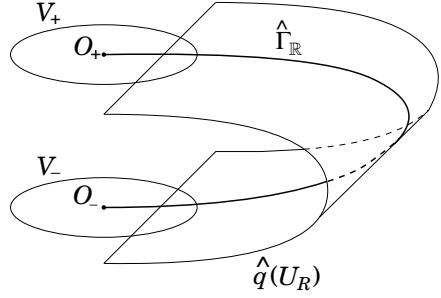}
\caption{Riemann surface $\hat{\Gamma}=\hat{q}(U_R)\cup V_+\cup V_-$.
\label{fig:2a}}
\end{figure}

Let $p(x)$ be a holomorphic function in a neighborhood $U_0$ of $\Rset$ in $\Cset$
 such that $p(x)\to\pm 1$ and $p_x(x)/q_x(x)\to 0$ as $x\to\pm\infty$,
 and let $\hat{\Gamma}_\Rset=\{\hat{q}(x)=(q(x),p(x))\mid x\in\Rset\}\cup\{(0,1),(0,-1)\}$.
Let $U_\pm$ be neighborhoods of $O_\pm=(0,\pm 1)$ in $\Cset^2$
 and let $V_\pm=U_\pm\cap\hat{q}(U_0)$.
Let $R>0$ be sufficient large
 and let $U_R\subset U_0$ be a neighborhood of the open interval $(-R,R)$ in $\Cset$
 such that $\hat{q}(U_R)$ does not contain $O_\pm$ but intersect $V_\pm$.
Thus, we define a Riemann surface $\hat{\Gamma}$ that consists of $V_\pm$ and $\hat{q}(U)$:
 $s_\pm=e^{\mp\lambda_\pm x}$ are used as the coordinates in $V_\pm$
 while the original complex variable $x\in U_R$ is used as the coordinate in $\hat{q}(U_R)$.
Note that $\hat{\Gamma}\supset\Gamma_\Rset$.
See Fig.~\ref{fig:2a}.

Let $A(x)$ be the coefficient matrix in \eqref{eqn:ZS1} or \eqref{eqn:ZS2}, i.e.,
\[
A(x)=\begin{pmatrix}
-ik & q(x)\\
-1 & ik
\end{pmatrix}
\quad\mbox{or}\quad
\begin{pmatrix}
-ik & q(x)\\
r(x) & ik
\end{pmatrix}.
\]
We express the ZS systems \eqref{eqn:ZS1} and \eqref{eqn:ZS2} as
\begin{equation}
\frac{\d\eta}{\d x}=A(x)\eta
\label{eqn:ZSU}
\end{equation}
in $\hat{q}(U_R)$, and
\begin{equation}
\frac{\d\eta}{\d s_\pm}=\mp\frac{1}{\lambda_\pm s_\pm}A_\pm(s_\pm)\eta,
\label{eqn:ZSpm}
\end{equation}
in $V_\pm$, where
\[
A_\pm(s_\pm)=\begin{pmatrix}
-ik & q_\pm(s_\pm)\\
-1 & ik
\end{pmatrix}
\quad\mbox{or}\quad
\begin{pmatrix}
-ik & q_\pm(s_\pm)\\
r_\pm(s_\pm) & ik
\end{pmatrix}
\]
for \eqref{eqn:ZS1} or \eqref{eqn:ZS2}.
Note that $s_\pm=0$ at $O_\pm$
 and $\d/\d x=\mp\lambda_\pm s_\pm\d/\d s_\pm$ in $V_\pm$.
Thus, we can regard them as a linear system of differential equations
 on the Riemann surface $\hat{\Gamma}$.

\begin{thm}
\label{thm:2b}
Suppose that $q(x)$ satisfies $($resp. $q(x),r(x)$ satisfy$)$ condition~{\rm(A)}.
If the ZS system \eqref{eqn:ZS1} $($resp. \eqref{eqn:ZS2}$)$ is integrable
 in the meaning of differential Galois theory for $k\in\Rset\setminus\{0\}$
 when it is regarded as a linear system of differential equations
 on the Riemann surface $\hat{\Gamma}$,
 then $q(x)$ is $($resp. $q(x),r(x)$ are$)$ reflectionless.
\end{thm}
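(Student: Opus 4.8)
The plan is to prove the contrapositive: if $q$ (resp.\ $q,r$) is (resp.\ are) not reflectionless, then the ZS system is not integrable on $\hat\Gamma$. Fix $k\in\Rset\setminus\{0\}$ and assume \eqref{eqn:ZSU}--\eqref{eqn:ZSpm} is integrable by quadrature. By the characterization recalled in Section~3 (Kolchin's theorem) the identity component $G^\circ$ of its differential Galois group $G$ is solvable, and since $\tr A(x)=0$ we have $G\subseteq\SL(2,\Cset)$. On $\hat\Gamma$ the only singular points are $O_\pm$, and these are \emph{regular} singular: in the coordinate $s_\pm$ the right-hand side of \eqref{eqn:ZSpm} has a simple pole at $s_\pm=0$ with residue $\mp\lambda_\pm^{-1}A_\pm(0)$, whose eigenvalues are $\pm ik/\lambda_\pm$. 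As $\hat\Gamma$ is simply connected, $\pi_1(\hat\Gamma\setminus\{O_+,O_-\})$ is generated by loops around $O_+$ and $O_-$; since the singularities are regular (not irregular), $G$ is the Zariski closure of $\langle M_+,M_-\rangle$, where $M_\pm$ is the monodromy around $O_\pm$. Moreover the eigenvalues of $M_\pm$ are $e^{\pm2\pi k/\lambda_\pm}$, of modulus $e^{\pm2\pi k\,\Re\lambda_\pm/|\lambda_\pm|^2}\ne1$ because $k\ne0$ and $\Re\lambda_\pm>0$; hence each $M_\pm$ is semisimple with two distinct eigenvalues, neither a root of unity, and $G$ is infinite.

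Then I would invoke the classification of solvable algebraic subgroups of $\SL(2,\Cset)$: combined with the above, $G$ must be conjugate into the Borel subgroup of upper triangular matrices or into the normalizer of the diagonal torus, and in the latter case an element outside the torus would have eigenvalues $\pm i$, which is excluded, so $M_+$ and $M_-$ both lie in one common maximal torus. Thus either (i) $M_+$ and $M_-$ share a common eigenline, or (ii) they have the same unordered pair of eigenlines. Now I would read the eigenlines off the Jost solutions: by \eqref{eqn:bc} the eigensolutions of $M_+$ are $\psi$ and $\bar\psi$ and those of $M_-$ are $\phi$ and $\bar\phi$, and by \eqref{eqn:ab1} the relations $[\phi]=[\psi]$, $[\phi]=[\bar\psi]$, $[\bar\phi]=[\psi]$, $[\bar\phi]=[\bar\psi]$ are equivalent to $a(k)=0$, $b(k)=0$, $\bar b(k)=0$, $\bar a(k)=0$, respectively. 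Hence case~(i) forces one of $a(k),b(k),\bar a(k),\bar b(k)$ to vanish, while case~(ii) forces $\{[\phi],[\bar\phi]\}=\{[\psi],[\bar\psi]\}$, which in view of the Wronskian identity $a\bar a-b\bar b=1$ leaves only $b(k)=\bar b(k)=0$. (For \eqref{eqn:ZS1} one may instead pass to \eqref{eqn:ZS1a} and apply Kovacic's algorithm, obtaining the same dichotomy.)

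Next I would let $k$ range over $\Rset\setminus\{0\}$. Since $a\bar a-b\bar b\equiv1$, neither $a$ nor $\bar a$ vanishes identically, so each, being holomorphic near $\Rset$ by condition~(A), has only discrete zeros; therefore for all but countably many $k$ the alternative reduces to $b(k)=0$ or $\bar b(k)=0$, i.e.\ $b(k)\bar b(k)=0$, and by continuity $b(k)\bar b(k)=0$ for every $k\in\Rset\setminus\{0\}$. Condition~(A) forces $q,r$ to decay exponentially, so the scattering coefficients are analytic in a strip about $\Rset$; hence $b\equiv0$ or $\bar b\equiv0$, and combining this with the relations among $a,\bar a,b,\bar b$ from Section~4 (equivalently, with the inverse-scattering reconstruction) gives $b\equiv\bar b\equiv0$. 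Thus $\rho(k)=\bar\rho(k)=0$ on $\Rset\setminus\{0\}$ and $q$ (resp.\ $q,r$) is (resp.\ are) reflectionless.

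I expect the crux to be the first step: making differential Galois theory for \eqref{eqn:ZSU}--\eqref{eqn:ZSpm} on the non-compact surface $\hat\Gamma$ precise and showing rigorously that integrability there is equivalent to solvability of $G^\circ$ with $G=\overline{\langle M_+,M_-\rangle}$, in particular that the regular singular points $O_\pm$ contribute nothing beyond their local monodromies and that $\hat\Gamma$ is simply connected, so $M_+,M_-$ are the only generators. Granting this and the existence and asymptotics of the Jost solutions (Section~6, and Appendix~A for the passage to \eqref{eqn:ZS1a}), what remains---the $\SL(2,\Cset)$ group-theoretic dichotomy, the identification of monodromy eigenlines with Jost solutions, and the analytic continuation in $k$---is essentially bookkeeping.
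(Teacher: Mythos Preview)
Your approach coincides with the paper's: Schlesinger's theorem identifies the Galois group with the Zariski closure of $\langle M_+,M_-\rangle$, the eigenvalues $e^{\pm 2\pi k/\lambda_\pm}$ (not roots of unity since $\Re\lambda_\pm>0$ and $k\in\Rset\setminus\{0\}$) eliminate all but the triangular cases in the $\SL(2,\Cset)$ classification, and triangularizability is then read as $a(k)b(k)=0$ or $\bar a(k)\bar b(k)=0$, with the identity theorem in $k$ finishing the job. The only difference is packaging---you match eigenlines of $M_\pm$ directly with the lines spanned by Jost solutions, while the paper writes $M_+=B_0^{-1}\bigl(\begin{smallmatrix}e^{-2\pi k/\lambda_+}&0\\0&e^{2\pi k/\lambda_+}\end{smallmatrix}\bigr)B_0$ with $B_0=\bigl(\begin{smallmatrix}a&\bar b\\b&\bar a\end{smallmatrix}\bigr)$ and inspects the off-diagonal entries---but the content is the same.
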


Theorem~\ref{thm:2b} is proved in Section~6.

Assume that $q(x),r(x)$ are rational functions of $e^{\lambda x}$
 for some $\lambda\in\Cset$ with $\Re\lambda>0$,
 have no singularity on $\Rset$, and $q(x),r(x)\to 0$ as $x\to\pm\infty$.
Then $q(x),r(x)$ satisfy condition~(A), as stated in Section~1.
Moreover, the ZS systems \eqref{eqn:ZS1} and \eqref{eqn:ZS2}
 are regarded as linear systems of differential equations over $\Cset(e^{\lambda x})$,
 as in Theorem~\ref{thm:2a}.
In this situation we immediate obtain the following result
 as a corollary for Theorem~\ref{thm:2b}.

\begin{cor}
\label{cor:2a}
Suppose that $q(x)$ is a rational function $($resp. $q(x),r(x)$ are rational functions$)$
 of $e^{\lambda x}$ for some $\lambda\in\Cset$ with $\Re\lambda>0$,
 has $($resp. have$)$ no singularity on $\Rset$,
 and $q(x)\to 0$ $($resp. $q(x),r(x)\to 0)$ as $x\to\pm\infty$.
If the ZS system \eqref{eqn:ZS1} $($resp. \eqref{eqn:ZS2}$)$ over $\Cset(e^{\lambda x})$
 is integrable in the meaning of differential Galois theory for $k\in\Rset\setminus\{0\}$,
 then $q(x)$ is $($resp. $q(x),r(x)$ are$)$ reflectionless.
\end{cor}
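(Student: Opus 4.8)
The plan is to deduce the corollary from Theorem~\ref{thm:2b}: under the stated hypotheses condition~{\rm(A)} is automatic, so the only point is that integrability over $\Cset(e^{\lambda x})$ implies integrability on the Riemann surface $\hat\Gamma$ associated with such a potential.

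As already noted in Section~1, if $q(x)$ is (resp. $q(x),r(x)$ are) rational in $z:=e^{\lambda x}$ with $\Re\lambda>0$, has (resp. have) no pole on $\Rset$, and tends (resp. tend) to $0$ as $x\to\pm\infty$, then condition~{\rm(A)} holds with $\lambda_\pm=\lambda$: the functions $q_+,r_+$ (resp. $q_-,r_-$) are the expansions of the rational functions at $z=\infty$ (resp. $z=0$), which are holomorphic near $s_+=0$ (resp. $s_-=0$) and vanish there because of the decay. So it remains to relate the two notions of integrability.

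Next I would realize $\hat\Gamma$ as an open subset of the Riemann sphere $\Pset^1$ with coordinate $z$. Since $q,r$ have only finitely many poles off $\Rset$, one may take $R$ and the neighborhood $U_R$ of $(-R,R)$ so thin that $\hat q(U_R)$ avoids them all; then on $\hat q(U_R)$ the quantity $z=e^{\lambda x}$ is a single-valued holomorphic function of the coordinate $x$, while on $V_\pm$ one has $z=1/s_+$ and $z=s_-$, respectively, and these agree on the overlaps because there $s_\pm=e^{\mp\lambda x}$. Thus $z$ is a nonconstant meromorphic function on $\hat\Gamma$ sending $O_+,O_-$ to $z=\infty,0$, and $\Cset(e^{\lambda x})=\Cset(z)$ is contained in $\M(\hat\Gamma)$. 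Using $\d/\d x=\lambda z\,\d/\d z$ on $\hat q(U_R)$ and $\d/\d x=\mp\lambda s_\pm\,\d/\d s_\pm$ on $V_\pm$, a short computation shows that \eqref{eqn:ZSpm} and \eqref{eqn:ZSU} coincide, in the coordinate $z$, with the single equation $\d\eta/\d z=(\lambda z)^{-1}A\eta$ over $\Cset(z)$; that is, the ZS system on $\hat\Gamma$ is exactly the ZS system \eqref{eqn:ZS1} (resp. \eqref{eqn:ZS2}) over $\Cset(e^{\lambda x})$, restricted to $\hat\Gamma\subset\Pset^1$.

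It then suffices to note that solvability by quadrature is inherited by extensions of the base differential field: a fundamental system of solutions lying in a tower of quadrature extensions of $\Cset(e^{\lambda x})$ lies, after composition with $\Cset(e^{\lambda x})\subset\M(\hat\Gamma)$, in such a tower over $\M(\hat\Gamma)$ (equivalently, the differential Galois group on $\hat\Gamma$, which only detects the singular points $O_\pm$, is an algebraic subgroup of the one over $\Cset(e^{\lambda x})$, which also detects the poles of $q,r$, so its identity component stays solvable). Hence integrability of \eqref{eqn:ZS1} (resp. \eqref{eqn:ZS2}) over $\Cset(e^{\lambda x})$ for $k\in\Rset\setminus\{0\}$ implies its integrability on $\hat\Gamma$, and Theorem~\ref{thm:2b} yields that $q(x)$ is (resp. $q(x),r(x)$ are) reflectionless. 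The one step to be careful about is this last comparison: it must be checked that the definition of integrability ``on the Riemann surface $\hat\Gamma$'' given in Section~3 is monotone under the inclusion $\Cset(e^{\lambda x})\subset\M(\hat\Gamma)$; granting that, the corollary is indeed immediate.
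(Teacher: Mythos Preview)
Your proposal is correct and follows the same route as the paper: verify condition~(A), observe that integrability over $\Cset(e^{\lambda x})$ implies integrability on $\hat\Gamma$, and invoke Theorem~\ref{thm:2b}. The paper's proof simply asserts this implication in one sentence, whereas you supply the justification (identifying $\hat\Gamma$ with an open piece of $\Pset^1$ via $z=e^{\lambda x}$ and noting that Liouvillian extensions are preserved under the base change $\Cset(e^{\lambda x})\subset\M(\hat\Gamma)$); the caveat you flag at the end is exactly the point the paper takes for granted.
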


\begin{proof}
If the ZS system \eqref{eqn:ZS1} or \eqref{eqn:ZS2} over $\Cset(e^{\lambda x})$
 is integrable in the meaning of differential Galois theory for $k\in\Rset\setminus\{0\}$,
 then so is it as a linear system of differential equations
 on the Riemann surface $\hat{\Gamma}$.
This yields the desired result.
\end{proof}

In closing this section, we give two examples for the ZS system~\eqref{eqn:ZS1}.
They are immediately modified as those for \eqref{eqn:ZS2}.

\begin{ex}
\label{ex:2a}
Let $q(x)=\alpha\sech^2 x$ for some $\alpha>0$.
Obviously, $q(x)$ satisfies condition~{\rm(A)}.
As shown in Section~$2.5$ of {\rm\cite{L80}},
 it is a reflectionless potential in the linear Schr\"odinger equation \eqref{eqn:ZS1a}
 and consequently in the ZS system~\eqref{eqn:ZS1}
 $($see Appendix~A, especially Eq.~\eqref{eqn:a1c}$)$
 if and only if $\alpha=n(n+1)$ for some $n\in\Nset$.
Using Theorem~$\ref{thm:2a}$ and Corollary~$\ref{cor:2a}$, we see that
 the ZS system~\eqref{eqn:ZS1} over $\Cset(e^{2x})$ is integrable
 in the meaning of differential Galois theory for $k\in\Rset\setminus\{0\}$
 if and only if $\alpha=n(n+1)$ for some $n\in\Nset$.
\end{ex}

\begin{ex}
\label{ex:2b}
Let $\alpha>1$ be a real number and let $\xi^\h(x)$ be a heteroclinic orbit in
\begin{equation}
\xi_x=\xi(\xi-1)(\xi-\alpha)
\label{eqn:ex2b}
\end{equation}
and connect $\xi=0$ to $\xi=1$.
We see that $\xi=\xi^\h(x)$ satisfies
\[
\frac{\xi^{\alpha-1}(\alpha-\xi)}{(1-\xi)^\alpha}
=\frac{\xi^\h(0)^{\alpha-1}(\alpha-\xi^\h(0))}{(1-\xi^\h(0))^\alpha}e^x,
\]
but it is difficult to obtain its closed expression.
Let $q(x)=\xi_x^\h(x)$, as in Remark~$\ref{rmk:1a}$,
Then $q(x)$ satisfies condition~{\rm(A)} with $\lambda_-=\alpha$ and $\lambda_+=\alpha-1$.
Assume that $\alpha$ and $\alpha-1$ are rationally independent.
Then it follows from Theorem~$\ref{thm:2b}$ that
 the ZS system \eqref{eqn:ZS1} on the Riemann surface $\hat{\Gamma}$
 is not integrable in the meaning of differential Galois theory for all $k\in\Rset\setminus\{0\}$
 since $q(x)$ is not a rational function of some exponential function
 and it is not reflectionless by Theorem~$\ref{thm:2a}$.
\end{ex}


\section{Differential Galois Theory}

In this and the next sections we give some prerequisites for our result.
We begin with the differential Galois theory for linear differential equations,
 which is often referred to as the Picard-Vessiot theory,
 containing monodromy groups and Fuchsian equations.
See the textbooks \cite{CH11,PS03} for more details on the theory.

\subsection{Picard-Vessiot extensions}
Consider a linear system of differential equations
\begin{equation}\label{LinearSystem}
y'=Ay,\quad A\in\mathrm{gl}(n,\Kset),
\end{equation}
where $\Kset$ is a differential field and
 $\mathrm{gl}(n,\Kset)$ denotes the ring of $n\times n$ matrices
 with entries in $\Kset$.
We recall that a \emph{differential field} is a field
 endowed with a derivation $\partial$,
 which is an additive endomorphism
 satisfying the Leibniz rule.
By abuse of notation we write $y'$ instead of $\partial y$.
The set $\mathrm{C}_{\Kset}$ of elements of $\Kset$ for which $\partial$ vanishes
 is a subfield of $\Kset$
and called the \emph{field of constants of $\Kset$}.
In our application of the theory in this paper,
 the differential field $\Kset$ is
 the field of meromorphic functions on a Riemann surface $\Gamma$,
 so that the field of constants is $\Cset$.

A \emph{differential field extension} $\Lset\supset \Kset$
 is a field extension such that $\Lset$ is also a differential field
 and the derivations on $\Lset$ and $\Kset$ coincide on $\Kset$.
A differential field extension $\Lset\supset \Kset$
 satisfying the following three conditions is called a \emph{Picard-Vessiot extension}
 for \eqref{LinearSystem}:
\begin{enumerate}
\item[\bf (PV1)]
There exists a fundamental matrix $\Xi(x)$ of \eqref{LinearSystem} with entries in $\Lset$;
\item[\bf (PV2)]
The field $\Lset$ is generated by $\Kset$ and entries of the fundamental matrix $\Xi(x)$;
\item[\bf (PV3)]
The fields of constants for $\Lset$ and $\Kset$ coincide.
\end{enumerate}
The system \eqref{LinearSystem}
 admits a Picard-Vessiot extension which is unique up to isomorphism.
We give some notions on differential field extensions.

\begin{dfn}
A differential field extension $\Lset\supset\Kset$ is called 
\begin{enumerate}
\setlength{\leftskip}{-1.8em}
\item[(i)]
an \emph{integral extension} if there exists $a\in\Lset$ such
that $a'\in \Kset$ and $\Lset = \Kset(a)$,
 where $\Kset(a)$ is the smallest extension of $\Kset$ containing $a$;
\item[(ii)]
an \emph{exponential extension} if there exists $a\in\Lset$ such
that $a'/a\in \Kset$ and $\Lset = \Kset(a)$;
\item[(iii)]
an \emph{algebraic extension} if there exists $a\in\Lset$
 such that it is algebraic over $\Kset$ and $\Lset = \Kset(a)$.
\end{enumerate}
\end{dfn}

\begin{dfn}
A differential field extension $\Lset\supset\Kset$ is called a 
\emph{Liouvillian extension} if it can be decomposed as a tower of extensions,
\[
\Lset = \Kset_n \supset \ldots \supset \Kset_1\supset 
\Kset_0 = \Kset,
\]
such that each extension $\Kset_{j+1}\supset \Kset_j$
 is either integral, exponential or algebraic. 
\end{dfn}

Thus, if the Picard-Vessiot extension $\Lset\supset\Kset$ is Liouvillian,
 then Eq.~\eqref{LinearSystem} is solved by quadrature.

We now fix a Picard-Vessiot extension $\Lset\supset \Kset$
 and fundamental matrix $\Phi$ with entries in $\Lset$
 for \eqref{LinearSystem}.
Let $\sigma$ be a \emph{$\Kset$-automorphism} of $\Lset$,
 which is a field automorphism of $\Lset$
 that commutes with the derivation of $\Lset$
 and leaves $\Kset$ pointwise fixed.
Obviously, $\sigma(\Phi)$ is also a fundamental matrix of \eqref{LinearSystem}
 and consequently there is a matrix $M_\sigma$ with constant entries
 such that $\sigma(\Phi)=\Phi M_\sigma$.
This relation gives a faithful representation
 of the group of $\Kset$-automorphisms of $\Lset$
 on the general linear group as
\[
R\colon \mathrm{Aut}_{\Kset}(\Lset)\to\GL(n,\mathrm{C}_{\Lset}),
\quad \sigma\mapsto M_{\sigma},
\]
where $\GL(n,\mathrm{C}_{\Lset})$
is the group of $n\times n$ invertible matrices with entries in $\mathrm{C}_{\Lset}$.
The image of $R$
 is a linear algebraic subgroup of $\GL(n,\mathrm{C}_{\Lset})$,
 which is called the \emph{differential Galois group} of \eqref{LinearSystem}
and denoted by $\mathrm{Gal}(\Lset/\Kset)$.
This representation is not unique
 and depends on the choice of the fundamental matrix $\Phi$,
 but a different fundamental matrix only gives rise to a conjugated representation.
Thus, the differential Galois group is unique up to conjugation
 as an algebraic subgroup of the general linear group.

Let $G\subset\GL(n,\mathrm{C}_{\Lset})$ be an algebraic group.
Then it contains a unique maximal connected algebraic subgroup $G^0$,
 which is called the \emph{connected component of the identity}
 or \emph{connected identity component}.
The connected identity component $G^0\subset G$
 is a normal algebraic subgroup and the smallest subgroup of finite index,
 i.e., the quotient group $G/G^0$ is finite.
By the Lie-Kolchin Theorem \cite{CH11,PS03},
a connected solvable linear algebraic group is triangularizable.
Here a subgroup of $\GL(n,\mathrm{C}_{\Lset})$ is said to be \emph{triangularizable}
 if it is conjugated to a subgroup of the group of (lower) triangular matrices.
 The following theorem relates
 the solvability of the differential Galois group
 with a Liouvillian Picard-Vessiot extension (see \cite{CH11,PS03} for the proof).

\begin{thm}
\label{thm:dg}
Let $\Lset\supset\Kset$ be a Picard-Vessiot extension of \eqref{LinearSystem}.
The connected identity component
 of the differential Galois group $\mathrm{Gal}(\Lset/\Kset)$ is solvable
 if and only if the extension $\Lset\supset\Kset$ is Liouvillian.
\end{thm}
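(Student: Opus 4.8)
The plan is to establish both implications of Theorem~\ref{thm:dg} through the Galois correspondence for Picard--Vessiot extensions (the fundamental theorem of differential Galois theory), together with the Lie--Kolchin theorem and the structure theory of linear algebraic groups over the algebraically closed field $\mathrm{C}_\Lset$; all of these are available in \cite{CH11,PS03}. A preliminary observation I would record first is that for any differential field $\M$ with $\Kset\subset\M\subset\Lset$, the extension $\Lset/\M$ is again Picard--Vessiot for \eqref{LinearSystem}: it is generated over $\M$ by the entries of a fundamental matrix, and, $\M$ being a subfield of $\Lset$, one has $\mathrm{C}_\M=\mathrm{C}_\Lset$, so (PV1)--(PV3) hold. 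In particular any tower of intermediate differential fields between $\Kset$ and $\Lset$ introduces no new constants, which disposes of the usual delicacy about Liouvillian extensions.

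For the implication ``$\mathrm{Gal}(\Lset/\Kset)^0$ solvable $\Rightarrow$ $\Lset/\Kset$ Liouvillian'': writing $G=\mathrm{Gal}(\Lset/\Kset)$, I would first put $\M:=\Lset^{G^0}$. Since $G^0\trianglelefteq G$ has finite index, the correspondence makes $\M/\Kset$ a Picard--Vessiot, hence algebraic, extension, while $\Lset/\M$ is Picard--Vessiot with connected solvable Galois group $G^0$. By Lie--Kolchin $G^0$ is triangularizable, and the structure theory of connected solvable linear algebraic groups then yields a chain of closed subgroups $G^0=N_0\trianglerighteq N_1\trianglerighteq\cdots\trianglerighteq N_m=\{e\}$ with $N_{j+1}\trianglelefteq N_j$ and each $N_j/N_{j+1}$ isomorphic to $\mathbb{G}_a$ or $\mathbb{G}_m$. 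Passing to fixed fields $\Kset_j:=\Lset^{N_j}$ produces a tower $\M=\Kset_0\subset\cdots\subset\Kset_m=\Lset$ whose steps are Picard--Vessiot with Galois group $\mathbb{G}_a$ or $\mathbb{G}_m$, and it then remains to quote the standard identifications: a Picard--Vessiot extension with Galois group $\mathbb{G}_a$ is an integral extension and one with Galois group $\mathbb{G}_m$ is an exponential extension. For the latter, for instance, the one-dimensional representation furnishes $t$ with $\sigma(t)=\chi(\sigma)\,t$ for every $\sigma$ in the Galois group, so $t'/t$ is Galois-fixed and hence lies in the base, while $t$ generates the extension by a transcendence-degree count; the $\mathbb{G}_a$ case is analogous with $\sigma(t)=t+c(\sigma)$. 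Concatenating $\M/\Kset$ with this tower displays $\Lset/\Kset$ as Liouvillian, and ``solved by quadrature'' then follows as noted before the theorem.

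For the converse, ``$\Lset/\Kset$ Liouvillian $\Rightarrow$ $\mathrm{Gal}(\Lset/\Kset)^0$ solvable'': starting from a tower $\Kset=\Kset_0\subset\cdots\subset\Kset_n=\Lset$ with each step integral, exponential, or algebraic — and, after possibly refining it (and correspondingly enlarging $\Lset$, which does not affect the conclusion, as $\mathrm{Gal}(\Lset/\Kset)$ is then a quotient of the enlarged Galois group) so that every algebraic step is Galois — I would invoke the preliminary observation: each $\Lset/\Kset_j$ is Picard--Vessiot and no step adds constants, so a nontrivial integral step is necessarily transcendental and Picard--Vessiot with group $\mathbb{G}_a$, an exponential step is Picard--Vessiot with group $\mathbb{G}_m$ (or is really an algebraic step), and a Galois algebraic step is Picard--Vessiot with finite Galois group. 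Hence, setting $H_j:=\mathrm{Gal}(\Lset/\Kset_j)$, the correspondence yields a subnormal series $G=H_0\trianglerighteq\cdots\trianglerighteq H_n=\{e\}$ with each factor isomorphic to $\mathbb{G}_a$, $\mathbb{G}_m$, or a finite group. It then remains to prove the purely group-theoretic fact that such a series forces $G^0$ to be solvable, which I would do by induction on $n$: when $G/H_1$ is $\mathbb{G}_a$ or $\mathbb{G}_m$ the image of $G^0$ in $G/H_1$ is abelian, so the (connected!) derived subgroup $[G^0,G^0]$ lies in $(H_1)^0$, which is solvable by the inductive hypothesis, whence $G^0$ is solvable; and when $G/H_1$ is finite one has $G^0\subseteq H_1$ and concludes directly.

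I expect the main obstacle to lie in the bookkeeping of the converse: confirming that each single integral, exponential, or algebraic extension sitting inside $\Lset$ is genuinely Picard--Vessiot with the claimed Galois group — this is exactly where the ``no new constants'' remark and the normalization of the algebraic steps are used — and then the extraction of solvability of $G^0$ from the subnormal series. In the forward direction the one slightly technical point is the explicit identification of $\mathbb{G}_a$- and $\mathbb{G}_m$-Picard--Vessiot extensions with integral and exponential extensions; everything else is a direct application of the Galois correspondence and of standard structure theory, and may be cited from \cite{CH11,PS03}.
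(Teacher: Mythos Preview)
The paper does not prove Theorem~\ref{thm:dg} at all: it is stated as a standard result of differential Galois theory with the parenthetical ``(see \cite{CH11,PS03} for the proof)'' and is then used as a black box. Your outline is precisely the textbook argument found in those references --- Galois correspondence plus Lie--Kolchin and the filtration of a connected solvable group by $\mathbb{G}_a$/$\mathbb{G}_m$ factors for one direction, and the subnormal series coming from a Liouvillian tower for the other --- and it is correct; so there is nothing to compare beyond noting that you have supplied what the paper merely cites.
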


Thus, if the connected identity component
 of the differential Galois group $\mathrm{Gal}(\Lset/\Kset)$ is solvable,
 then Eq.~\eqref{LinearSystem} is solved by quadrature
 and called \emph{integrable in the meaning of differential Galois theory}. 

\subsection{Monodromy groups and Fuchsian equations}

Let $\Kset$ be the field of meromorphic functions on a Riemann surface $\Gamma$.
So the set of singularities in the entries of $A=A(x)$ is a discrete subset of $\Gamma$,
 which is denoted by $S$.
We also refer to a singularity of the entries of $A(x)$ as that of \eqref{LinearSystem}.
Let $x_0\in\Gamma\setminus S$.
We prolong the fundamental matrix $\Xi(x)$ analytically
 along any loop $\gamma$ based at $x_0$ and containing no singular point,
 and obtain another fundamental matrix $\gamma\ast\Xi(x)$.
So there exists a constant nonsingular matrix $M_{[\gamma]}$ such that
\begin{equation}
\gamma\ast\Xi(x) = \Xi(x)M_{[\gamma]}.
\label{eqn:defM}
\end{equation}
The matrix $M_{[\gamma]}$ depends on the homotopy class $[\gamma]$
 of the loop $\gamma$
 and it is called the \emph{monodromy matrix} of $[\gamma]$.

Let $\pi_1(\Gamma\setminus S,x_0)$
 be the fundamental group of homotopy classes of loops based at $x_0$.
We have a representation
\[
\tilde{R}\colon \pi_1(\Gamma\setminus S,x_0)\to {\rm GL}(n,\Cset), 
\quad [\gamma]\mapsto M_{[\gamma]}.
\]
The image of $\tilde{R}$ is called the \emph{monodromy group}
 of \eqref{LinearSystem}.
As in the differential Galois group,
the representation $\tilde{R}$ depends on the choice of the fundamental matrix,
but the monodromy group is defined as a group of matrices up to conjugation.
In general, a monodromy transformation
 defines an automorphism of the corresponding Picard-Vessiot extension.
We also just write $M_\gamma$ for $M_{[\gamma]}$ below.

A singular point $x = \bar{x}$ of \eqref{LinearSystem} is called \emph{regular}
 if for any sector $a<\arg(x-\bar{x})<b$ with $a<b$
 there exists a fundamental matrix $\Xi(x) = (\Xi_{ij}(x))$
 such that for some $c > 0$ and integer $N$,
 $|\Xi_{ij}(x)|<c|x-\bar{x}|^N$ as $x\to\bar{x}$ in the sector;
 otherwise it is called \emph{irregular}.
Especially, if $A(x)=B(x)/x$, where $B(x)$ is a holomorphic at $x=0$,
 then Eq.~\eqref{LinearSystem} has a regular singularity at $x=0$
 (see, e.g., Section~2.4 of \cite{B00}).
We have the following result, which plays an essential role
 in the proof of Theorem~\ref{thm:2b} in Section~6
 (see, e.g., Theorem 5.8 in \cite{PS03} for the proof).

\begin{thm}[Schlessinger]\label{thm:sl}
Suppose that Eq.~\eqref{LinearSystem} is Fuchsian.
Then the differential Galois group of \eqref{LinearSystem}
 is the Zariski closure of the monodromy group.
\end{thm}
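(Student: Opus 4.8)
The plan is to establish the two inclusions between $H:=\overline{\mathrm{Mon}}$, the Zariski closure of the monodromy group, and $G:=\mathrm{Gal}(\Lset/\Kset)$, the differential Galois group of \eqref{LinearSystem} with Picard-Vessiot extension $\Lset\supset\Kset$. The inclusion $H\subseteq G$ is the routine one. First I would realize $\Lset$ concretely: fixing $x_0\in\Gamma\setminus S$, the entries $\Xi_{ij}$ of a fundamental matrix are germs of multivalued meromorphic functions at $x_0$ admitting unrestricted analytic continuation in $\Gamma\setminus S$, and $\Lset=\Kset(\Xi_{ij})$ sits inside this differential field. Each $[\gamma]\in\pi_1(\Gamma\setminus S,x_0)$ acts by analytic continuation; this action commutes with $\d/\d x$, fixes $\Kset$ pointwise since its elements are single-valued on $\Gamma$, and sends $\Xi$ to $\Xi M_{[\gamma]}$ and hence $\Lset$ to itself. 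Thus every $M_{[\gamma]}$ lies in $G$, and because $G$ is Zariski closed, $H\subseteq G$.

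For the reverse inclusion I would invoke the fundamental theorem of Picard-Vessiot theory (valid since the field of constants is $\Cset$): the assignment $H'\mapsto\Lset^{H'}$ is an inclusion-reversing bijection between Zariski-closed subgroups of $G$ and intermediate differential fields, with $\Lset^G=\Kset$. Hence $H=G$ follows once $\Lset^H=\Kset$ is proved. Since $H$ is the Zariski closure of the monodromy group and the stabilizer in $G$ of any $f\in\Lset$ is Zariski closed, $f$ is fixed by $H$ as soon as it is fixed by the monodromy group; therefore $\Lset^H=\Lset^{\mathrm{Mon}}$, which in the concrete model consists precisely of those $f\in\Lset$ that are single-valued, hence meromorphic, on $\Gamma\setminus S$.

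It then remains to show that such an $f$ extends meromorphically across each point of $S$, so that $f\in\Kset$; this is the step that uses the Fuchsian hypothesis. Fix $\bar x\in S$ with a local coordinate $t$ vanishing there. Regularity of the singularity provides, in each sector at $\bar x$, a fundamental matrix whose entries are $O(|t|^{-N})$; the coefficients from $\Kset$ are meromorphic at $\bar x$, hence $O(|t|^{-N'})$; and since $f$ is a rational expression over $\Kset$ in the entries of that fundamental matrix (by (PV2)), $f=O(|t|^{-M})$ near $\bar x$ for some $M$. A single-valued function of moderate growth has at worst a pole: applying Riemann's removable-singularity theorem to $t^{M}f$ shows $f$ is meromorphic at $\bar x$. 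Carrying this out at every point of $S$ gives $f\in\Kset$, so $\Lset^H=\Kset=\Lset^G$ and $H=G$, which combined with the first paragraph proves $H=G$.

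The main obstacle is exactly the passage from ``single-valued on $\Gamma\setminus S$'' to ``meromorphic on $\Gamma$'': this is where the Fuchsian assumption is genuinely needed, through the moderate-growth estimate that defines regular singular points; the remaining ingredients — the monodromy-in-Galois inclusion and the Galois correspondence with $\Lset^G=\Kset$ — are standard and would be quoted from \cite{CH11,PS03}. A minor point to check along the way is that the concrete realization of $\Lset$ inside multivalued meromorphic functions is faithful, which holds because a nonzero element of the abstract Picard-Vessiot extension cannot vanish identically as an analytic function.
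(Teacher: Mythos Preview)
The paper does not give its own proof of this theorem: it is quoted as a classical result with a pointer to Theorem~5.8 of \cite{PS03}. Your outline is correct and is essentially the standard argument one finds there --- monodromy sits inside the differential Galois group, and the Galois correspondence reduces $H=G$ to showing that every monodromy-invariant element of $\Lset$ is already in $\Kset$, which is exactly where the Fuchsian hypothesis enters.

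One technical point deserves a little more care. From ``the entries of a local fundamental matrix are $O(|t|^{-N})$'' and ``$f$ is a rational expression in those entries over $\Kset$'' you cannot immediately conclude $f=O(|t|^{-M})$: moderate growth is not preserved under division if the denominator is allowed to vanish to infinite order. The clean way around this, which is what \cite{PS03} effectively uses, is the local structure of solutions at a regular singular point: there is a fundamental matrix of the form $Z(t)\,t^{L}$ with $Z$ holomorphic and invertible at $t=0$ and $L$ constant, so every entry of $\Xi$ lies in the field $\Cset(\{t\})(t^{a_1},\dots,t^{a_r},\log t)$ generated over convergent Laurent series by finitely many powers $t^{a_j}$ and $\log t$. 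In that field an element is single-valued if and only if it is a convergent Laurent series, hence meromorphic at $t=0$. This replaces your moderate-growth estimate by an algebraic statement and closes the gap; the rest of your sketch stands as written.
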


Assume that Eq.~\eqref{LinearSystem} is Fuchsian and $\tr A(x)=0$.
Then we have
\[
(\det\Xi(x))'=\tr A(x)\det\Xi(x)=0.
\]
Hence,  by \eqref{eqn:defM}, $\det\Xi(x)=\det\Xi(x)\det M_\gamma$, which yields
\[
\det M_\gamma=1
\]
since $\det\Xi(x)\neq 0$.
This means by Theorem~\ref{thm:sl}
 that $\mathrm{Gal}(\Lset/\Kset)\subset\SL(n,\Cset)$.
For $n=2$ we can classify such algebraic groups as follows
 (see Section~2.1 of \cite{M99} for a proof).

\begin{prop}
\label{prop:3a}
Any algebraic group $G\subset\SL(2,\Cset)$ is similar to one of the following types:
\begin{enumerate}
\setlength{\leftskip}{-1.2em}
\item[(i)] $G$ is finite and $G^0= \{\id_2\}$;
\item[(ii)] $G = \left\{
\begin{pmatrix}
\lambda & 0\\
\mu & \lambda^{-1} 
\end{pmatrix}
\middle|\,\lambda\text{ is a root of $1$, $\mu\in\Cset$}
\right\}$
and $G^0 = \left\{\begin{pmatrix}
1&0\\
\mu & 1 
\end{pmatrix}
\middle|\,\mu \in \Cset\right\}$;
\item[(iii)] 
$G = G^0 = 
\left\{
\begin{pmatrix}
\lambda &0 \\
0 & \lambda ^{-1}
\end{pmatrix}
\middle|\,\lambda \in \Cset^{*}
\right\}$;
\item[(iv)] $G = \left\{
\begin{pmatrix}
\lambda & 0\\
0 & \lambda^{-1} 
\end{pmatrix},
\begin{pmatrix}
0 & -\beta^{-1}\\
\beta & 0 
\end{pmatrix}
\middle|\,\lambda, \beta \in \Cset^{*}
\right\}$
and $G^0 = \left\{\begin{pmatrix}
\lambda &0\\
0 & \lambda^{-1}
\end{pmatrix}
\middle|\, \lambda \in \Cset^*\right\}$;
\item[(v)] $G = G^0 = \left\{
\begin{pmatrix}
\lambda & 0\\
\mu & \lambda^{-1}
\end{pmatrix}
\middle|\,\lambda \in \Cset^{*},\, \mu \in \Cset
\right\}$;
\item[(vi)] $G = G^0 = \SL (2,\, \Cset)$.
\end{enumerate}
\end{prop}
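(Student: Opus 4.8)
The plan is to reduce the classification to two ingredients: the list of connected algebraic subgroups of $\SL(2,\Cset)$, and the normalizers of those connected subgroups inside $\SL(2,\Cset)$. Given an arbitrary algebraic group $G\subset\SL(2,\Cset)$, the connected identity component $G^0$ is a normal algebraic subgroup of finite index, so $G$ normalizes $G^0$ and $G/G^0$ is finite; hence $G$ is squeezed between $G^0$ and the normalizer $N_{\SL(2,\Cset)}(G^0)$, and one only has to enumerate the possibilities in each case.

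First I would classify the connected algebraic subgroups of $\SL(2,\Cset)$. In characteristic zero these correspond bijectively to the algebraic Lie subalgebras of $\mathfrak{sl}(2,\Cset)$, so it suffices to list subalgebras by dimension. The only $3$-dimensional one is $\mathfrak{sl}(2,\Cset)$ itself. Every nonzero $X\in\mathfrak{sl}(2,\Cset)$ has characteristic polynomial $t^2+\det X$, hence is nilpotent if $\det X=0$ and semisimple if $\det X\neq0$; therefore each $1$-dimensional subalgebra is conjugate either to $\Span\{e\}$ with $e=(\begin{smallmatrix}0&0\\1&0\end{smallmatrix})$ or to $\Span\{h\}$ with $h=(\begin{smallmatrix}1&0\\0&-1\end{smallmatrix})$. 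For dimension $2$ one checks that $\mathfrak{sl}(2,\Cset)$ has no $2$-dimensional abelian subalgebra, since the centralizer of any nonzero element is $1$-dimensional; thus a $2$-dimensional subalgebra is the non-abelian one and is conjugate to $\Span\{h,e\}$. Exponentiating, the connected algebraic subgroups of $\SL(2,\Cset)$ are, up to conjugation: $\{\id_2\}$; the unipotent group $U=\{(\begin{smallmatrix}1&0\\\mu&1\end{smallmatrix})\mid\mu\in\Cset\}$; the torus $D=\{(\begin{smallmatrix}\lambda&0\\0&\lambda^{-1}\end{smallmatrix})\mid\lambda\in\Cset^\ast\}$; the Borel $B=\{(\begin{smallmatrix}\lambda&0\\\mu&\lambda^{-1}\end{smallmatrix})\mid\lambda\in\Cset^\ast,\ \mu\in\Cset\}$; and $\SL(2,\Cset)$.

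Next I would compute the relevant normalizers and read off $G$ case by case. If $G^0=\SL(2,\Cset)$ then $G=\SL(2,\Cset)$, which is (vi); if $G^0=\{\id_2\}$ then $G$ is finite, which is (i). If $G^0=B$ after conjugation, then a Borel subgroup of $\SL(2,\Cset)$ is self-normalizing, so $G=B$, which is (v). If $G^0=U$, then $N_{\SL(2,\Cset)}(U)=B$ (the stabilizer of the unique line fixed by $U$), so $B\supseteq G\supseteq U$; the homomorphism $B\to\Cset^\ast$, $(\begin{smallmatrix}\lambda&0\\\mu&\lambda^{-1}\end{smallmatrix})\mapsto\lambda$, has kernel $U$, hence identifies $G/U$ with a finite subgroup of $\Cset^\ast$, which is the group of $n$-th roots of $1$ for some $n$; then $G$ is the preimage of that subgroup, which is (ii). If $G^0=D$, then $N_{\SL(2,\Cset)}(D)=D\sqcup Dw$ with $w=(\begin{smallmatrix}0&-1\\1&0\end{smallmatrix})$, which contains $D$ with index $2$; since $D\subseteq G\subseteq N_{\SL(2,\Cset)}(D)$ we get either $G=D$, which is (iii), or $G=N_{\SL(2,\Cset)}(D)$, which is (iv) upon writing the anti-diagonal coset as $(\begin{smallmatrix}0&-\beta^{-1}\\\beta&0\end{smallmatrix})$. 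This exhausts the list.

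The step I expect to be the main obstacle is the classification of connected subgroups, specifically making the passage from Lie subalgebras to Zariski-closed connected subgroups precise (checking the relevant subalgebras are algebraic) and verifying the absence of a $2$-dimensional abelian subalgebra of $\mathfrak{sl}(2,\Cset)$; everything after that is bookkeeping with explicit matrices. One could alternatively sidestep most of the Lie-algebra discussion and argue group-theoretically via the Lie--Kolchin theorem cited above: a connected solvable $G^0$ is triangularizable, reducing the solvable cases to closed subgroups of the triangular group, while a non-solvable connected $G^0$ must be all of $\SL(2,\Cset)$.
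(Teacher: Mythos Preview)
The paper does not supply its own proof of this proposition: it simply refers the reader to Section~2.1 of Morales-Ruiz's monograph \cite{M99}. So there is nothing in the paper to compare your argument against line by line.

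Your proposal is correct and is essentially the standard route to this classification (and is close in spirit to what one finds in Morales-Ruiz, or in Kovacic's original paper, or in van der Put--Singer). The two-step strategy---first pin down $G^0$ among the connected algebraic subgroups of $\SL(2,\Cset)$, then recover $G$ from the inclusion $G^0\subset G\subset N_{\SL(2,\Cset)}(G^0)$ together with finiteness of $G/G^0$---is exactly the right organizing principle. Your enumeration of the connected subgroups via Lie subalgebras is accurate, including the observation that the centralizer of any nonzero element of $\mathfrak{sl}(2,\Cset)$ is one-dimensional, which rules out a two-dimensional abelian subalgebra. The normalizer computations are correct as well: the Borel is self-normalizing, $N(U)=B$, and $N(D)=D\cup Dw$.

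Two minor remarks. First, the point you flag about the Lie-algebra/algebraic-group correspondence is not really an obstacle here: every Lie subalgebra on your list is the Lie algebra of a Zariski-closed subgroup (there are no ``irrational-slope'' phenomena because the maximal torus of $\SL(2,\Cset)$ is one-dimensional), so the passage is immediate. Second, as you implicitly note, the statement in case~(ii) is slightly informal: one should read ``$\lambda$ is a root of $1$'' as ``$\lambda$ ranges over the $n$-th roots of unity for some fixed $n$,'' exactly as your argument via $B/U\cong\Cset^\ast$ produces. Your alternative route via Lie--Kolchin for the solvable connected cases is also perfectly valid and arguably cleaner.
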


This proposition also plays a key role
 in the proof of Theorem~\ref{thm:2b} in Section~6.
  

\section{Scattering Coefficients and Refelectionless Potentials}

We next give necessary information on scattering coefficients
 and refelectionless potentials defined in Section~2.
See the textbooks \cite{A11,APT04} for more details on these materials.
 
\subsection{Scattering coefficients}
We first present some properties of the scattering coefficients.
Noting that the trace of the coefficient matrices in \eqref{eqn:ZS1} and \eqref{eqn:ZS2} are zero,
 we see by \eqref{eqn:bc} that the Wronskian of $\phi(x)$ and $\bar{\phi}(x)$
 (resp. of $\psi(x)$ and $\bar{\psi}(x)$) is one, i.e.,
\begin{equation}
\det(\phi(x;k),\bar{\phi}(x;k))=\det(\bar{\psi}(x;k),\psi(x;k))=1.
\label{eqn:W}
\end{equation}
Hence, it follows from \eqref{eqn:ab1} that
\begin{equation}
a(k)\bar{a}(k)-b(k)\bar{b}(k)=1.
\label{eqn:det}
\end{equation}
Moreover, under the transformation $x\mapsto kx$,
 the ZS systems \eqref{eqn:ZS1} and \eqref{eqn:ZS2} are rewritten as
\[
v_x=
\begin{pmatrix}
-i & \epsilon q(x)\\
\epsilon & i
\end{pmatrix}v
\]
and
\[
v_x=
\begin{pmatrix}
-i & \epsilon q(x)\\
\epsilon r(x) & i
\end{pmatrix}v,
\]
respectively, where $\epsilon=1/k$.
This means that
\begin{equation}
a(k),\bar{a}(k)\to 1,\quad
b(k),\bar{b}(k)\to 0\qquad
\mbox{as $k\to\pm\infty$}
\label{eqn:ab3}
\end{equation}
We also have the following analyticity of the  scattering coefficients.

\begin{prop}
\label{prop:4a}\
\begin{enumerate}
\setlength{\leftskip}{-1.8em}
\item[(i)]
$a(k),\bar{a}(k),b(k),\bar{b}(k)$ are analytic in $\Rset\setminus\{0\}$.
\item[(ii)]
$a(k)$ and $\bar{a}(k)$
 can be analytically continued in the upper and lower $k$-planes, respectively.
\end{enumerate}
\end{prop}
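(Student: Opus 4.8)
The plan is to base everything on the Volterra integral equations for the Jost solutions together with the fact that, by \eqref{eqn:ab1} and the Wronskian normalization \eqref{eqn:W}, each scattering coefficient is an $x$-independent Wronskian of two Jost solutions,
\[
a(k)=\det\bigl(\phi(x;k),\psi(x;k)\bigr),\qquad b(k)=\det\bigl(\bar\psi(x;k),\phi(x;k)\bigr),
\]
\[
\bar a(k)=\det\bigl(\bar\psi(x;k),\bar\phi(x;k)\bigr),\qquad \bar b(k)=\det\bigl(\bar\phi(x;k),\psi(x;k)\bigr),
\]
so that it suffices to control the dependence on $k$ of the four Jost solutions. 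For \eqref{eqn:ZS2} one would write $\phi(x;k)=e^{-ikx}\mu(x;k)$ and similarly for $\bar\phi,\psi,\bar\psi$; for \eqref{eqn:ZS1} one first conjugates by the matrix $T(k)$ of \eqref{eqn:Phi} in order to diagonalize the nonzero background \eqref{eqn:ZS0}. In both cases the reduced functions satisfy Volterra equations — with integration over $(-\infty,x)$ for the solutions normalized at $-\infty$ and over $(x,+\infty)$ for those normalized at $+\infty$ — whose Green's functions contain a single nonconstant exponential of the form $e^{\pm 2ik(x-y)}$.

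\emph{Part (ii):} for $\phi$ and $\psi$ this exponential is, on the relevant range of the integration variable, bounded by $1$ whenever $\Im k\ge 0$. Since $q,r\in L^1(\Rset)$ under condition~(A), the Neumann series for the reduced functions then converges uniformly on $\{\Im k\ge 0\}$, each term being holomorphic on $\{\Im k>0\}$; hence $\phi(x;k)e^{ikx}$ and $\psi(x;k)e^{-ikx}$ extend holomorphically to the upper half $k$-plane, and so does $a(k)=\det\bigl(\phi(x;k)e^{ikx},\psi(x;k)e^{-ikx}\bigr)$. The mirror argument for $\bar\phi,\bar\psi$, where the relevant exponential is bounded for $\Im k\le 0$, shows that $\bar a(k)=\det\bigl(\bar\psi(x;k)e^{ikx},\bar\phi(x;k)e^{-ikx}\bigr)$ extends holomorphically to the lower half $k$-plane.

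\emph{Part (i):} here I would exploit the full strength of condition~(A): $q$ and $r$ decay like $e^{-c|x|}$ with $c=\min(\Re\lambda_+,\Re\lambda_-)>0$, so in the Volterra kernels the factor $e^{\pm2ik(x-y)}$, even when it grows, is dominated by the decay of the potentials as long as $2|\Im k|<c$. The Neumann series for all four reduced Jost functions therefore converges on the strip $|\Im k|<c/2$, locally uniformly in $x$, with each reduced function holomorphic there. Consequently the four Wronskians $a,\bar a,b,\bar b$ extend holomorphically to that strip and are in particular analytic on $\Rset\setminus\{0\}$; the point $k=0$ is excluded because the reduced equations for \eqref{eqn:ZS1} carry a factor $k^{-1}$ and, more fundamentally, because the asymptotic system \eqref{eqn:ZS0} degenerates there. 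One also has to note that the Wronskian identities above, valid for real $k\ne 0$ by \eqref{eqn:ab1} and \eqref{eqn:W}, persist on the domain of holomorphy by the identity theorem.

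The main obstacle is the careful bookkeeping in deriving the reduced Volterra equations — in particular the $k$-dependent conjugation by $T(k)$ for \eqref{eqn:ZS1}, which must be arranged so that $e^{2ik(x-y)}$, and not its complex conjugate, occurs in the entry of the Green's function that governs analyticity — together with the uniform-in-$x$ estimates on the Neumann series that pin down its exact domain of convergence (the closed half-planes for part (ii), the strip $|\Im k|<c/2$ for part (i)). These are the standard estimates of inverse-scattering theory, but they have to be carried out with care near $k=0$, where the construction breaks down.
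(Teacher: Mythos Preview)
Your proposal is correct and follows the same path as the paper: both express $a,\bar a,b,\bar b$ as Wronskians of Jost solutions via \eqref{eqn:ab1} and \eqref{eqn:W}, and then read off their $k$-dependence from that of the Jost solutions. The paper's own proof is very terse---it simply asserts that $\phi,\bar\phi,\psi,\bar\psi$ are bounded and analytic in $k\in\Rset\setminus\{0\}$ for part~(i) and refers to \cite{A11,APT04} for part~(ii)---and your Volterra/Neumann-series sketch is precisely the standard argument behind those references. The one substantive difference is your treatment of part~(i): you invoke the exponential decay in condition~(A) to obtain analyticity of all four coefficients in a strip $|\Im k|<c/2$, whereas the paper (implicitly) needs only $q,r\in L^1(\Rset)$, since for real $k$ the kernel factor $e^{\pm2ik(x-y)}$ has unit modulus and the Neumann series already converges on $\Rset\setminus\{0\}$. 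Your route buys a bit more---analytic continuation of $b,\bar b$ into a strip---at the cost of using a hypothesis the paper does not need at this point.
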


\begin{proof}
By \eqref{eqn:ab1} and \eqref{eqn:W} we have
\begin{align*}
&
a(k)=\det(\phi(x;k),\psi(x;k)),\quad
\bar{a}(k)=\det(\bar{\psi}(x;k), \bar{\phi}(x;k)),\\
&
b(k)=\det(\bar{\psi}(x;k),\phi(x;k)),\quad
\bar{b}(k)=\det(\bar{\phi}(x;k),\psi(x;k)).
\end{align*}
Since $\phi(x;k),\bar{\phi}(x;k),\psi(x;k),\bar{\psi}(x;k)$ are bounded
 and analytic in $k\in\Rset\setminus\{0\}$, we obtain part~(i).
See Section~ 9.2 of \cite{A11} and Section~2.2.2 of \cite{APT04}
 for a proof of part~(ii).
Here we notice \eqref{eqn:a1b}.
\end{proof}

\begin{rmk}
\label{rmk:4a}\
\begin{enumerate}
\setlength{\leftskip}{-1.8em}
\item[(i)]The ZS system \eqref{eqn:ZS2}
 has the Jost solutions satisfying \eqref{eqn:bc} for $k=0$,
 so that the scattering coefficients are still defined and analytic at $k=0$.
\item[(ii)]
It follows by the identity theorem $($e.g., Theorem~$3.2.6$ of {\rm\cite{AF03}}$)$
 from Proposition~$\ref{prop:4a}$ and \eqref{eqn:ab3}
 that zeros of $a(k),\bar{a}(k)$ are isolated and their numbers are finite.
\end{enumerate}
\end{rmk}

\subsection{Reflectionless potentials}

We now assume that $b(k),\bar{b}(k)=0$ for $k\in\Rset\setminus\{0\}$,
 i.e., $q(x)$ and $q(x),r(x)$ are reflectionless potentials
 in \eqref{eqn:ZS1} and \eqref{eqn:ZS2}, respectively.
Note that $a(k),\bar{a}(k)\neq 0$ for $k\in\Rset\setminus\{0\}$, by \eqref{eqn:det}.
For \eqref{eqn:ZS1} and \eqref{eqn:ZS2} separately,
 we provide some formulas for reflectionless potentials and Jost solutions.
 
\subsubsection{ZS system \eqref{eqn:ZS1}}
We begin with the ZS system \eqref{eqn:ZS1}.
Following the standard IST theory for the KdV equation \eqref{eqn:KdV}
 (e.g., Chapter~9 of \cite{A11}),
 we discuss the linear Schr\"odinger equation \eqref{eqn:ZS1a} instead of \eqref{eqn:ZS1}.
Let $\hat{a}(k),\hat{b}(k)$ be the scattering coefficients for \eqref{eqn:ZS1a}, as in Appendix~A.
Note that $\hat{b}(k)=0$ for $k\in\Rset\setminus\{0\}$ by \eqref{eqn:a1b}.

Suppose that $\hat{a}(k)$ has $n$ simple zeros $\{k_j\}_{j=1}^n$,
 where $\Im k_j>0$.
Let
\[
\hat{N}_j(x)=\hat{\psi}(x;k_j)e^{ik_jx}
\]
for $j=1,\ldots,n$,
 where $\hat{\psi}(x;k)$ is the Jost solution to \eqref{eqn:ZS1a} satisfying \eqref{eqn:bc1a}.
Note that by \eqref{eqn:bc1a} and \eqref{eqn:ab1a}
\begin{equation}
\begin{split}
&
\hat{N}_j(x)\sim e^{2ik_jx}\quad\mbox{as $x\to+\infty$},\\
&
\hat{N}_j(x)\sim \hat{b}(k_j)^{-1}\quad\mbox{as $x\to-\infty$}
\end{split}
\label{eqn:N1a}
\end{equation}
since $\Im k_j>0$, $\hat{a}(k_j)=0$ and $\hat{b}(k_j)\neq 0$ by \eqref{eqn:a1b}.
Then we can show that they satisfy
\begin{equation}
\hat{N}_\ell(x)
=e^{2ik_\ell x}\biggl(
 1-\sum_{j=1}^n\frac{\hat{C}_j\hat{N}_j(x)}{k_\ell+k_j}\biggr),\quad
\ell=1,\ldots,n,
\label{eqn:N1}
\end{equation}
where
\[
\hat{C}_j=\frac{\hat{b}(k_j)}{\hat{a}_k(k_j)},\quad
j=1,\ldots,n.
\]
Moreover, we have
\begin{equation}
q(x)=\frac{\partial}{\partial x}\biggl(2i\sum_{j=1}^n\hat{C}_j \hat{N}_j(x)\biggr)
\label{eqn:q1}
\end{equation}
and
\begin{equation}
\hat{\psi}(x;k)=\biggl(1-\sum_{j=1}^n\frac{\hat{C}_j\hat{N}_j(x)}{k+k_j}\biggr)e^{ikx}
\label{eqn:ZS1sol}
\end{equation}
for $k\in\Cset$.
See, e.g., Sections~9.1-9.3 of \cite{APT04} for the derivations of the above relations.
Since they are obtained by the basic arithmetic operations from \eqref{eqn:N1},
 we see that $\hat{N}_\ell(x)$, $\ell=1,\ldots,n$,
 are rational functions of $e^{2ik_jx}$, $j=1,\ldots,n$.
It follows from \eqref{eqn:N1a} and \eqref{eqn:q1} that
\[
\lim_{x\to\pm\infty}q(x)=0.
\]
From the standard IST theory we see that $k_j$, $j=1,\ldots,n$, are purely imaginary
 in the upper half complex plane.
Moreover, in the KdV equation \eqref{eqn:KdV},
 Eq.~\eqref{eqn:q1} corresponds to an initial condition of an $n$-soliton.
 See, e.g., Sections~$9.2$ and 9.7 of {\rm\cite{A11} for more details.
The two linearly independent solutions $\psi(x;k),\bar{\psi}(x;k)$ to \eqref{eqn:ZS1}
 are obtained via \eqref{eqn:a1a} from $\hat{\psi}(x;k),\hat{\psi}(x;-k)$ for $k\neq 0$.

\subsubsection{ZS system \eqref{eqn:ZS2}}

We turn to the ZS system \eqref{eqn:ZS2}
 and follow the standard IST theory for another class of integrable systems,
 which contains the examples of Section~1 except for the KdV equation \eqref{eqn:KdV}.
See, e.g., Chapter~2 of \cite{APT04} for more details of the theory.

Suppose that $a(k)$ and $\bar{a}(k)$ have $n$ and $\bar{n}$ simple zeros
 $\{k_j\}_{j=1}^n$ and $\{\bar{k}_j\}_{j=1}^{\bar{n}}$, respectively,
 where $\Im k_j>0$ and $\Im\bar{k}_j<0$.
Let
\[
N_j(x)=\psi(x;k_j)e^{-ik_jx},\quad
\bar{N}_j(x)=\bar{\psi}(x;\bar{k}_j)e^{i\bar{k}_jx}
\]
for $j=1,\ldots,n$ or $j=1,\ldots,\bar{n}$,
 where $\psi(x;k),\bar{\psi}(x;k)$ are the Jost solutions to \eqref{eqn:ZS2}
  satisfying \eqref{eqn:bc}.
Note that
\begin{equation}
\begin{split}
&
N_j(x)\sim
\begin{pmatrix}
0\\
1
\end{pmatrix},\quad
\bar{N}_j(x)\sim
\begin{pmatrix}
1\\
0
\end{pmatrix}\quad
\mbox{as $x\to+\infty$},\\
&
N_j(x)\sim b(k_j)^{-1}
\begin{pmatrix}
0\\
1
\end{pmatrix}e^{-2ik_jx},\quad
\bar{N}_j(x)\sim\bar{b}(\bar{k}_j)^{-1}
\begin{pmatrix}
1\\
0
\end{pmatrix}e^{2i\bar{k}_jx}\quad
\mbox{as $x\to-\infty$}
\end{split}
\label{eqn:N2a}
\end{equation}
since $\Im k_j>0$, $\Im\bar{k}_j<0$,
  $a(k_j),\bar{a}(\bar{k}_j)=0$ and $b(k_j),\bar{b}(\bar{k}_j)\neq 0$ by \eqref{eqn:det}.
We can show that they satisfy
\begin{equation}
\begin{split}
&
N_\ell(x)
=\begin{pmatrix}
0\\
1
\end{pmatrix}
+\sum_{j=1}^{\bar{n}}\frac{\bar{C}_j e^{-2i\bar{k}_jx}\bar{N}_j(x)}{k_\ell-\bar{k}_j},\quad
\ell=1,\ldots,n,\\
&
\bar{N}_\ell(x)
=\begin{pmatrix}
1\\
0
\end{pmatrix}
+\sum_{j=1}^n\frac{C_j e^{2ik_jx}N_j(x)}{\bar{k}_\ell-k_j},\quad
\ell=1,\ldots,\bar{n},
\end{split}
\label{eqn:N2}
\end{equation}
where
\begin{align*}
&
C_j=\frac{b(k_j)}{a_k(k_j)},\quad
j=1,\ldots,n,\\
&
\bar{C}_j=\frac{\bar{b}(\bar{k}_j)}{\bar{a}_k(\bar{k}_j)},\quad
j=1,\ldots,\bar{n}.
\end{align*}
Moreover, we have
\begin{equation}
q(x)=2i\sum_{j=1}^{\bar{n}}\bar{C}_j e^{-2i\bar{k}_jx}\bar{N}_{j1}(x),\quad
r(x)=-2i\sum_{j=1}^n C_j e^{2jk_jx}N_{j2}(x)
\label{eqn:qr2}
\end{equation}
and
\begin{equation}
\begin{split}
&
\psi(x;k)=\biggl(
\begin{pmatrix}
0\\
1
\end{pmatrix}
+\sum_{j=1}^{\bar{n}}\frac{\bar{C}_j e^{-2i\bar{k}_jx}\bar{N}_j(x)}{k-\bar{k}_j}\biggr)e^{ikx},\\
&
\bar{\psi}(x;k)=\biggl(
\begin{pmatrix}
1\\
0
\end{pmatrix}
+\sum_{j=1}^n\frac{\bar{C}_j e^{2ik_jx}N_j(x)}{k-\bar{k}_j}\biggr)e^{-ikx}
\end{split}
\label{eqn:ZS2sol}
\end{equation}
for $k\in\Cset$, where $N_{j\ell}(x)$ and $\bar{N}_{j\ell}(x)$
 are the $\ell$-th components of $N_j(x)$ and $\bar{N}_j(x)$, respectively.
See, e.g., Section~2.2.3 of \cite{APT04} for the derivations of the above relations.
Since they are obtained by the basic arithmetic operations from \eqref{eqn:N2},
 we see that $N_\ell(x)$ and $\bar{N}_\ell(x)$, $\ell=1,\ldots,n$ or $\bar{n}$,
 are rational functions of $e^{2ik_jx}$ and $e^{2i\bar{k}_jx}$, $j=1,\ldots,n$ or $\bar{n}$.
It follows from \eqref{eqn:N2a} and \eqref{eqn:qr2} that
\[
q(x),r(x)\to 0\quad\mbox{as $x\to\pm\infty$}.
\]
In the four examples \eqref{eqn:NLS}-\eqref{eqn:sinhG},
 Eq.~\eqref{eqn:qr2} corresponds to an initial condition of an $n$-soliton
 when $r(x)$ is appropriately defined with $n=\bar{n}$.
See, e.g., Section~$2.3$ of {\rm\cite{APT04} for more details.


\section{Proof of Theorem~\ref{thm:2a}}

In this section we prove Theorem~\ref{thm:2a}
 for \eqref{eqn:ZS1} and \eqref{eqn:ZS2} separately.

\subsection{ZS system \eqref{eqn:ZS1}}

We begin with the ZS system \eqref{eqn:ZS1}.
Henceforth we assume that the potential $q(x)$ is reflectionless
 and satisfies condition~{\rm(A)}.
We first prove the following.

\begin{lem}
\label{lem:5a}
$q(x)$ is a rational function of $e^{\lambda x}$ for some constant $\lambda>0$.
\end{lem}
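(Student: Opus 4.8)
The plan is to exploit the explicit formulas for reflectionless potentials recalled in Section~4.2.1. Since $q(x)$ is reflectionless, the associated Schr\"odinger equation \eqref{eqn:ZS1a} is also reflectionless by the relation \eqref{eqn:a1c} in Appendix~A, so its scattering coefficient $\hat{a}(k)$ has finitely many simple zeros $k_1,\dots,k_n$ in the upper half-plane (finiteness coming from Remark~\ref{rmk:4a}(ii)). I would first invoke \eqref{eqn:N1}: the functions $\hat{N}_1(x),\dots,\hat{N}_n(x)$ satisfy a linear system whose coefficients are polynomials in the exponentials $e^{2ik_jx}$, $j=1,\dots,n$, with nonzero determinant, so by Cramer's rule each $\hat{N}_\ell(x)$ is a rational function of the $e^{2ik_jx}$. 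Then \eqref{eqn:q1} expresses $q(x)$ as an $x$-derivative of a finite $\Cset$-linear combination of the $\hat{N}_j(x)$, and a derivative of a rational function of exponentials $e^{2ik_jx}$ is again such a rational function; hence $q(x)$ is a rational function of the finitely many exponentials $e^{2ik_1x},\dots,e^{2ik_nx}$.

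The next step is to collapse this multivariate rational dependence to dependence on a single exponential $e^{\lambda x}$ with $\lambda>0$. Here I would use the extra input, recalled at the end of Section~4.2.1 from the standard IST theory, that the $k_j$ are \emph{purely imaginary} in the upper half-plane, say $k_j=i\kappa_j$ with $\kappa_j>0$; consequently $e^{2ik_jx}=e^{-2\kappa_j x}$ is a positive-exponent exponential (after an obvious sign normalization). If all the $\kappa_j$ are rational multiples of a common positive real $\mu$, then every $e^{-2\kappa_j x}$ is an integer power of $e^{-2\mu x/m}$ for a suitable common denominator $m$, and we may take $\lambda$ to be that common $2\mu/m$ (or its absolute value), so $q(x)\in\Cset(e^{\lambda x})$.

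The main obstacle I anticipate is precisely the rational-dependence issue in the previous paragraph: a priori the $\kappa_j$ need not be commensurable, in which case $q(x)$ would be a rational function of several independent exponentials but not of a single one. To handle this I would argue using condition~(A): by hypothesis $q(x)=q_\pm(e^{\mp\lambda_\pm x})$ for $|\Re x|$ large with $q_\pm$ holomorphic near $0$ and $\Re\lambda_\pm>0$. Comparing this single-variable holomorphic behavior at $x\to-\infty$ with the expansion of the rational function of the $e^{-2\kappa_j x}$ forces all the frequencies $2\kappa_j$ appearing with nonzero coefficient to be nonnegative-integer multiples of $\lambda_-$ (and similarly at $+\infty$ with $\lambda_+$); in particular the $\kappa_j$ are commensurable and $q(x)$ is a rational function of $e^{\lambda x}$ with $\lambda=\lambda_-=\lambda_+$ (and one checks $\lambda$ can be taken real and positive). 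I would spell this matching-of-exponents argument out carefully, since it is the crux; the rest is bookkeeping with the formulas of Section~4.2.1.

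Alternatively, one can avoid the commensurability discussion entirely if Appendix~A or the standard theory already yields $\lambda_-=\lambda_+=2\kappa$ directly for a single $\kappa$; in the one-soliton-normalized setting the $e^{2ik_jx}$ are then genuinely powers of one exponential and Lemma~\ref{lem:5a} is immediate from \eqref{eqn:N1} and \eqref{eqn:q1}. I would present the argument so that it works in the general multi-eigenvalue case, using condition~(A) as the device that pins down a common $\lambda$.
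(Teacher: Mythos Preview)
Your proposal is correct and follows essentially the same route as the paper: both use \eqref{eqn:N1}--\eqref{eqn:q1} to write $q$ as a rational function of the exponentials $e^{2ik_jx}$, invoke the fact (from Section~4.2.1) that the $k_j$ are purely imaginary, and then appeal to condition~(A) to force the exponents $\kappa_j$ to be commensurable---the paper dispatches this last step in one sentence as ``obvious'', whereas you rightly flag it as the crux. One minor overreach: your conclusion $\lambda=\lambda_-=\lambda_+$ is stronger than what the argument actually yields or what the lemma requires, since condition~(A) does not determine $\lambda_\pm$ uniquely; all that is needed, and all the paper claims, is the existence of \emph{some} $\lambda>0$ with each $2\kappa_j$ an integer multiple of it.
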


\begin{proof}
Since $\hat{N}_\ell(x)$, $\ell=1,\ldots,n$, are rational functions
 of $e^{2ik_jx}$, $j=1,\ldots,n$, as stated in Section~4.2.1,
 we see that $q(x)$ is a rational function of $e^{2i k_jx}$, $j=1,\ldots,n_0$,
 after the order of $k_j$, $j=1,\ldots,n$, is changed if necessarily, where $1\le n_0\le n$.
If there does not exist a constant $\lambda>0$
 such that $k_j=i n_j\lambda$ with some integer $n_j>0$ for each $j=1,\ldots,n_0$,
 then $q(x)$ does not satisfy condition~(A) obviously.
Recall that $k_j$, $j=1,\ldots,n$, are purely imaginary in the upper half complex plane.
Thus, we obtain the result.
\end{proof}

Let $\hat{\psi}(x;k)$ be the Jost solution to the linear Schr\"odinger equation \eqref{eqn:ZS1a}
 satisfying \eqref{eqn:bc1a}, as in Section~4.2.1.

\begin{lem}
\label{lem:5b}
$\hat{\psi}(x;k)$ is a rational function of $e^{\lambda x}$ and $e^{ik x}$
 for $k\in\Cset\setminus\{0\}$.
\end{lem}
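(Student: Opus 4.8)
The plan is to read off the claimed rationality directly from the explicit formula \eqref{eqn:ZS1sol} for the Jost solution, combined with the structural facts already established about the $\hat N_j$. Recall from Section~4.2.1 that each $\hat N_\ell(x)$, $\ell=1,\dots,n$, is a rational function of $e^{2ik_1 x},\dots,e^{2ik_n x}$, being obtained by basic arithmetic operations (solving the linear system \eqref{eqn:N1}, whose coefficients are rational in these exponentials). By Lemma~\ref{lem:5a} and its proof, $q(x)$ satisfies condition~(A) only when there is a constant $\lambda>0$ with $k_j = i n_j \lambda$, $n_j \in \Nset_{>0}$, after reordering; hence $e^{2ik_j x} = e^{-2n_j\lambda x}$, and each $\hat N_\ell(x)$ is in fact a rational function of $e^{\lambda x}$ alone (indeed of $e^{-2\lambda x}$).

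Next I would substitute this into \eqref{eqn:ZS1sol}. For fixed $k \in \Cset\setminus\{0\}$, the expression
\[
\hat{\psi}(x;k)=\biggl(1-\sum_{j=1}^n\frac{\hat{C}_j\hat{N}_j(x)}{k+k_j}\biggr)e^{ikx}
\]
is the product of $e^{ikx}$ with the quantity in parentheses, and the latter is a $\Cset$-linear combination (with coefficients $1$ and $-\hat C_j/(k+k_j)$, which are constants once $k$ is fixed — note $k+k_j\neq 0$ since $k\neq 0$ and, if $k$ is real, $k_j$ is purely imaginary) of the rational functions $\hat N_j(x)$ of $e^{\lambda x}$. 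A $\Cset$-linear combination of rational functions of $e^{\lambda x}$ is again such a rational function, so the parenthesized factor lies in $\Cset(e^{\lambda x})$. This proves the assertion as stated: $\hat\psi(x;k)$ is a rational function of $e^{\lambda x}$ and $e^{ikx}$ (the two "variables" being genuinely independent unless $k$ happens to be an integer multiple of $-i\lambda/\ldots$, but the statement does not require them independent).

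There is essentially no obstacle here: the work was done upstream in deriving \eqref{eqn:N1}, \eqref{eqn:ZS1sol} and in Lemma~\ref{lem:5a}. The only point requiring a word of care is the denominator $k+k_j$: one must note it never vanishes for the relevant $k$, which is immediate since $k \neq 0$ and the $k_j$ are purely imaginary (for $k\in\Rset\setminus\{0\}$) or, more generally, since the $\hat\psi(x;k)$ in \eqref{eqn:ZS1sol} is by construction a meromorphic function of $k\in\Cset$ whose only possible poles are at $k=-k_j$, and we simply restrict to $k$ away from those points and from $0$. I would therefore keep the proof to two or three sentences: invoke the rationality of the $\hat N_j$ in $e^{\lambda x}$ from Section~4.2.1 together with Lemma~\ref{lem:5a}, then observe that \eqref{eqn:ZS1sol} exhibits $\hat\psi(x;k)$ as $e^{ikx}$ times a $\Cset$-linear combination of these, hence a rational function of $e^{\lambda x}$ and $e^{ikx}$.
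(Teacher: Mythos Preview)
Your argument has a gap at the key step. Lemma~\ref{lem:5a} (and its proof) only shows that the exponentials $e^{2ik_jx}$ \emph{which actually appear in $q(x)$} --- i.e., those with $j=1,\dots,n_0$ after reordering --- satisfy $k_j = in_j\lambda$ for positive integers $n_j$. It says nothing about the remaining $k_{n_0+1},\dots,k_n$, and there is no a~priori reason those must be integer multiples of $i\lambda$: the formula~\eqref{eqn:q1} for $q$ could in principle admit cancellations so that $q$ fails to involve some of the $e^{2ik_jx}$. Remark~\ref{rmk:5a} makes this very point explicit. Consequently, the step ``each $\hat N_\ell(x)$ is in fact a rational function of $e^{\lambda x}$ alone'' is unjustified, because the $\hat N_\ell$ genuinely involve \emph{all} of $e^{2ik_1x},\dots,e^{2ik_nx}$ through the linear system~\eqref{eqn:N1}.

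The paper closes this gap by an independent argument: it passes to the variable $s=e^{\lambda x}$ (legitimate since $q\in\Cset(s)$ by Lemma~\ref{lem:5a}), obtaining from~\eqref{eqn:ZS1a} a second-order equation over $\Cset(s)$ with regular singular points at $s=0$ and $s=\infty$ and indicial roots $\pm ik_j/\lambda$ at both. It then uses the decay of $\hat\psi(x;k_j)=\hat N_j(x)e^{-ik_jx}$ at both ends $x\to\pm\infty$ to force $-ik_j/\lambda$ to be a \emph{positive integer} for every $j=1,\dots,n$ (otherwise the Frobenius local forms at $s=0$ and $s=\infty$ are incompatible). Only after this Frobenius/indicial argument does one know that all the $\hat N_j$ lie in $\Cset(e^{\lambda x})$, and then your final observation via~\eqref{eqn:ZS1sol} finishes the proof.
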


\begin{proof}
Let $\hat{q}(s)$ be a rational function of $s$ such that $q(x)=q_0(e^{\lambda x})$
 with $q_0(0)=0$ and $\lim_{s\to\infty}q_0(s)=0$.
The existence of such a rational function is guaranteed by Lemma~\ref{lem:5a}.
Using the transformation $s=e^{\lambda x}$, we rewrite \eqref{eqn:ZS1a} as
\begin{equation}
s^2w_{ss}+sw_s+\frac{k_j^2+q_0(s)}{\lambda^2}w=0
\label{eqn:lem5b}
\end{equation}
at $k=k_j$, $j=1,\ldots,n$.
Equation~\eqref{eqn:lem5b} is a linear differential equation over $\Cset(s)$
 and has regular singularities at $s=0$ and $\infty$.
See e.g., Section~7.1 of  \cite{CH11} for the definition of regular singularities
 in higher-order differential equations, which is similar
 to that in linear systems of first-order differential equations such as \eqref{LinearSystem}.
The indicial equations (e.g., Section~7.1 of  \cite{CH11}) at $s=0$ and $\infty$ coincide
 and are given by
\[
\rho^2+\frac{k_j^2}{\lambda^2}=0,
\]
which has two roots at
\[
\rho=\mp\frac{ik_j}{\lambda}:=\pm\rho_j\in\Rset,
\]
for $j=1,\ldots,n$.
Note that $-ik_j>0$, $j=1,\ldots,n$.

Assume that $\rho_j>0$ is not an integer.
Then the Jost solution $w=\hat{\psi}(x;k_j)=\hat{N}_j(x)e^{-ik_jx}$ to \eqref{eqn:ZS1a}
 corresponds to a solution to \eqref{eqn:lem5b}
 which converges to $w=0$ as $s\to 0$ and $\infty$, and has the forms
\begin{equation}
w=s^{\rho_j}w_1(s)
\label{eqn:lem5b1}
\end{equation}
near $s=0$ and
\begin{equation}
w=s^{-\rho_j}w_2(1/s)
\label{eqn:lem5b2}
\end{equation}
near $s=\infty$, where $w_\ell(s)$, $\ell=1,2$, are holomorphic functions of $s$
 (see, e.g., Section~7.1 of \cite{CH11}).
This yields a contradiction
 since if it has the form \eqref{eqn:lem5b1} near $s=0$
 then $\hat{N}_j(x)$ is a function of $e^{\lambda x}$,
 so that it does not have the form \eqref{eqn:lem5b2} near $s=\infty$.
Thus, for each $j=1,\ldots,n$,
 $\rho_j>0$ is an integer and $k_j=in_j\lambda$ with some $n_j\in\Nset$.
This implies that $\hat{N}_j(x)$, $j=1,\ldots,n$, are rational functions of $e^{\lambda x}$.
So the result immediately follows from \eqref{eqn:ZS1sol}.
\end{proof}

\begin{proof}[Proof of Theorem~$\ref{thm:2a}$ for \eqref{eqn:ZS1}]
The first part immediately follows from Lemma~\ref{lem:5a}.
We regard the ZS system \eqref{eqn:ZS1} as a linear system over $\Cset(e^{\lambda x})$.
Since $\hat{\psi}(x;k)$ and $\hat{\psi}(x;-k)$ are linearly independent solutions
 to the linear Schr\"odinger equation \eqref{eqn:ZS1a},
 we see via Lemma~\ref{lem:5b} that the Picard-Vessiot extension of \eqref{eqn:ZS1}
 is an exponential extension of $\Cset(e^{\lambda x})$.
Thus, we obtain the second part by Theorem~\ref{thm:dg}.
\end{proof}

\begin{rmk}
\label{rmk:5a}
If the ZS system \eqref{eqn:ZS1} is regarded as a linear system of differential equations
 over $\Cset(e^{2ik_1x},\ldots,e^{2ik_nx})$,
 then it is always integrable in the meaning of differntial Galois theory,
 although the potential $q(x)$ may not contain all of $e^{2ik_1x},\ldots,e^{2ik_nx}$.
\end{rmk}

\subsection{ZS system \eqref{eqn:ZS2}}

We turn to the ZS system \eqref{eqn:ZS2}.
Henceforth we assume that the potentials $q(x),r(x)$ are reflectionless
 and satisfy condition~{\rm(A)}.
We proceed as in Section~5.1.
We first prove the following like Lemma~\ref{lem:5a}.

\begin{lem}
\label{lem:5c}
$q(x),r(x)$ are rational functions of $e^{\lambda x}$
 for some constant $\lambda\in\Cset$ with $\Im\lambda>0$.
\end{lem}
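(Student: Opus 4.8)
The plan is to mirror the argument of Lemma~\ref{lem:5a} for the ZS system \eqref{eqn:ZS1}, but now working with the formulas \eqref{eqn:N2}, \eqref{eqn:qr2} for \eqref{eqn:ZS2}. Recall from Section~4.2.2 that under the reflectionless assumption the functions $N_\ell(x)$, $\ell=1,\ldots,n$, and $\bar N_\ell(x)$, $\ell=1,\ldots,\bar n$, are rational functions of the exponentials $e^{2ik_jx}$, $j=1,\ldots,n$, and $e^{2i\bar k_jx}$, $j=1,\ldots,\bar n$, since they are produced by basic arithmetic operations from the linear system \eqref{eqn:N2}. Substituting these into \eqref{eqn:qr2} shows that $q(x)$ is a rational function of $e^{2i\bar k_1 x},\ldots,e^{2i\bar k_{\bar n}x}$ and $r(x)$ is a rational function of $e^{2ik_1x},\ldots,e^{2ik_nx}$. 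So $q(x),r(x)$ are both rational functions of $e^{i\mu_1 x},\ldots,e^{i\mu_m x}$ for the finite set of frequencies $\mu\in\{2k_1,\ldots,2k_n,2\bar k_1,\ldots,2\bar k_{\bar n}\}$.

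First I would reduce to the minimal set of frequencies actually occurring: after discarding exponentials that do not appear and reordering, write $q(x),r(x)$ as rational functions of $e^{i\mu_1 x},\ldots,e^{i\mu_{m_0}x}$ with the $\mu_\ell$ nonzero and pairwise distinct. Then I would invoke condition~(A): as $\Re x\to +\infty$ (resp.\ $\Re x\to-\infty$) the potentials must be holomorphic functions of the single variable $s_\pm=e^{\mp\lambda_\pm x}$ vanishing at $s_\pm=0$, hence they must decay. For this to be compatible with being a rational function of the $e^{i\mu_\ell x}$, the behaviour forces each $e^{i\mu_\ell x}$ appearing nontrivially to be an integer power of one common exponential $e^{\lambda x}$ with $\Re\lambda>0$; equivalently all ratios $\mu_\ell/\mu_1$ must be rational and each $i\mu_\ell$ must be a positive integer multiple of $\lambda:=i\mu_1$ (relabelling so that $\mu_1$ has minimal modulus among those with a given argument, and checking the arguments agree). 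Concretely: if two frequencies were rationally independent then $q(x)$ (or $r(x)$) would be quasiperiodic in the imaginary direction in a way that cannot match $q_\pm(e^{\mp\lambda_\pm x})$ along the real axis with the prescribed decay — this is exactly the obstruction already exploited in the proof of Lemma~\ref{lem:5b}. I would therefore conclude $k_j=in_j\lambda$ and $\bar k_j=i\bar n_j\lambda$ (up to the appropriate sign of the imaginary part) for some positive integers $n_j,\bar n_j$ and a single constant $\lambda$ with $\Re\lambda>0$; since $\Im k_j>0$ and $\Im\bar k_j<0$ the common $\lambda$ will in fact have $\Re\lambda>0$, and $\Im\lambda>0$ can be arranged as in the statement. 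Feeding $k_j=in_j\lambda$, $\bar k_j=i\bar n_j\lambda$ back into \eqref{eqn:qr2} then exhibits $q(x),r(x)$ as rational functions of $e^{\lambda x}$, as claimed.

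\textbf{The main obstacle} I expect is the step that turns ``rational function of several exponentials, satisfying condition~(A)'' into ``rational function of a single exponential'': one must rule out genuinely multi-frequency rational expressions and pin down that the surviving frequencies are commensurable and are all positive integer multiples of one base frequency. The cleanest way is probably to argue on the Riemann surface picture of Section~2: near $O_\pm$ the potential extends holomorphically in $s_\pm$, so the only singularities of $q,r$ in the compactified $e^{\lambda x}$-plane are at finite points, and the vanishing at $s_\pm=0$ together with analyticity forces the pole/zero structure to be that of a rational function of $e^{\lambda x}$; combined with the explicit exponential content coming from \eqref{eqn:qr2} this fixes $\lambda$. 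An alternative, more elementary route is to differentiate the heteroclinic-type relation or directly compare the asymptotic expansions $q(x)=q_\pm(e^{\mp\lambda_\pm x})$ with the partial-fraction expansion of the rational-in-exponentials formula to read off $\lambda_\pm$ as integer multiples of a common $\lambda$. Once commensurability is established the remaining bookkeeping — collecting the $e^{\lambda x}$-powers and verifying rationality — is routine, exactly parallel to Lemmas~\ref{lem:5a} and \ref{lem:5b}.
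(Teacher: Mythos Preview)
Your approach is essentially the same as the paper's: both start from the fact (Section~4.2.2) that $N_\ell,\bar N_\ell$ are rational in the exponentials $e^{2ik_jx},e^{2i\bar k_jx}$, pass via \eqref{eqn:qr2} to $q,r$ as rational functions of a minimal subset of these exponentials, and then invoke condition~(A) to force all the surviving frequencies to be integer multiples of a single $\lambda$ with $\Re\lambda>0$. The paper dispatches the commensurability step in one sentence (``\ldots do not satisfy condition~(A) obviously''), whereas you flag it as the main obstacle and sketch two routes; your elaboration is not needed but is not wrong either.
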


\begin{proof}
Since $N_\ell(x)$, $\ell=1,\ldots,n$, and $\bar{N}_\ell(x)$, $\ell=1,\ldots,\bar{n}$,
 are rational functions of $e^{2ik_jx}$, $j=1,\ldots,n$,
 and $e^{2i\bar{k}_jx}$, $j=1,\ldots,\bar{n}$, as stated in Section~4.2.2,
 we see that $q(x),r(x)$ are rational functions
 of $e^{2ik_jx}$, $j=1,\ldots,n_0$, and $e^{2i\bar{k}_jx}$, $j=1,\ldots,\bar{n}_0$,
 after the orders of $k_j$, $j=1,\ldots,n$,
 and $\bar{k}_j$, $j=1,\ldots,\bar{n}$, are changed if necessarily,
 where $1\le n_0\le n$ and $1\le\bar{n}_0\le\bar{n}$.
If there does not exist a constant $\lambda\in\Cset$ with $\Re\lambda>0$
 such that $k_j=i n_j\lambda$ with some integer $n_j>0$ for $j=1,\ldots,n_0$
 and $\bar{k}_j=-i\bar{n}_j\lambda$ with some integer $\bar{n}_j>0$ for $j=1,\ldots,\bar{n}_0$,
 then $q(x),r(x)$ do not satisfy condition~(A) obviously.
Thus, we obtain the result.
\end{proof}

Let $\psi(x;k),\bar{\psi}(x;k)$ be the Jost solutions to the ZS system \eqref{eqn:ZS2}
 satisfying \eqref{eqn:bc}, as in Section~4.2.2.
We also prove the following like Lemma~\ref{lem:5b}.

\begin{lem}
\label{lem:5d}
$\psi(x;k),\bar{\psi}(x;k)$ are rational functions of $e^{\lambda x}$ and $e^{ik x}$
 for $k\in\Cset$.
\end{lem}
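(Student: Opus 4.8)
The plan is to run the same argument as in the proof of Lemma~\ref{lem:5b}, but now on the ZS system \eqref{eqn:ZS2} directly rather than on an associated scalar Schr\"odinger equation, since for \eqref{eqn:ZS2} there is no reduction to a second-order ODE. By Lemma~\ref{lem:5c} we may write $q(x)=q_0(e^{\lambda x})$, $r(x)=r_0(e^{\lambda x})$ with $q_0,r_0$ rational, $q_0(0)=r_0(0)=0$ and $q_0(s),r_0(s)\to 0$ as $s\to\infty$. First I would substitute $s=e^{\lambda x}$ into \eqref{eqn:ZS2} to obtain a linear system over $\Cset(s)$,
\[
\frac{\d v}{\d s}=\frac{1}{\lambda s}
\begin{pmatrix}
-ik & q_0(s)\\
r_0(s) & ik
\end{pmatrix}v,
\]
which has a regular singularity at $s=0$ (the coefficient matrix is $B(s)/s$ with $B$ holomorphic at $0$, cf.\ Section~2.4 of \cite{B00}) and, after the change $s\mapsto 1/s$, a regular singularity at $s=\infty$ as well. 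The residue matrix at both $s=0$ and $s=\infty$ is $\pm\frac{1}{\lambda}\,\mathrm{diag}(-ik,ik)$, so the exponents (indicial roots) at $s=0$ are $\mp ik/\lambda$ and at $s=\infty$ are $\pm ik/\lambda$.

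Next I would fix $k=k_j$, so that $a(k_j)=0$. The Jost solution $\psi(x;k_j)=N_j(x)e^{-ik_jx}$ corresponds, in the $s$ variable, to a solution of the system that decays as $s\to 0$ and as $s\to\infty$ (this uses \eqref{eqn:N2a} together with $\mathrm{Im}\,k_j>0$, $\mathrm{Re}\,\lambda>0$; note $b(k_j)\neq 0$). Near $s=0$ the exponent governing the decaying branch is $-ik_j/\lambda=\rho_j>0$ (since $a(k_j)=0$ forces the solution to align with that exponent), while near $s=\infty$ the decaying branch has exponent $-\rho_j$. If $\rho_j$ is not a positive integer, then the local solution at $s=0$ has the form $s^{\rho_j}$ times a holomorphic function, hence is a (single-valued) function of $s=e^{\lambda x}$ in a neighborhood of $s=0$ only if $\rho_j\in\Nset$; more precisely, non-integrality of $\rho_j$ produces a genuine branch point, contradicting the fact that $N_j(x)$, being a rational function of the $e^{2ik_\ell x}$ and $e^{2i\bar k_\ell x}$ by Section~4.2.2, is single-valued and meromorphic in $s$. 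This forces $\rho_j\in\Nset$, i.e.\ $k_j=in_j\lambda$ with $n_j\in\Nset$, and symmetrically $\bar k_j=-i\bar n_j\lambda$ with $\bar n_j\in\Nset$. Consequently every $e^{2ik_jx}$ and $e^{2i\bar k_jx}$ is a power of $e^{\lambda x}$, so $N_j(x)$, $\bar N_j(x)$ are rational functions of $e^{\lambda x}$, and then \eqref{eqn:ZS2sol} shows $\psi(x;k)$, $\bar\psi(x;k)$ are rational functions of $e^{\lambda x}$ and $e^{ikx}$ for every $k\in\Cset$.

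The main obstacle I anticipate is making the branch-point contradiction fully rigorous without the convenience of the scalar second-order equation: one must argue that a solution of the first-order system that simultaneously lies in the decaying eigendirection at $s=0$ and at $s=\infty$ cannot have a non-integer exponent at $s=0$, since such an exponent is intrinsic to the local monodromy and would make the continuation multivalued, whereas the explicit formula from the IST exhibits $N_j(x)$ as single-valued in $s$. A subtlety worth checking is the possible presence of logarithmic terms or exponent differences that are nonzero integers at the regular singular points; these do not affect single-valuedness and hence do not obstruct the argument, but should be acknowledged. Once $k_j,\bar k_j$ are pinned to integer multiples of $i\lambda$, the remainder is immediate from \eqref{eqn:ZS2sol}.
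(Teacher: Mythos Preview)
Your outline matches the paper's proof of Lemma~\ref{lem:5d} step for step: pass to $s=e^{\lambda x}$ via Lemma~\ref{lem:5c}, obtain the system \eqref{eqn:lem5d} with regular singularities at $s=0$ and $s=\infty$, read off the exponents $\pm ik/\lambda$, pin down the local form of the decaying Jost solution at each end, force the exponents to be integers, and finish with \eqref{eqn:ZS2sol}. So the strategy is correct and essentially the paper's.

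There is, however, a genuine gap in your contradiction step. You justify the contradiction by saying that a non-integer $\rho_j$ would create a branch point at $s=0$, ``contradicting the fact that $N_j(x)$, being a rational function of the $e^{2ik_\ell x}$ and $e^{2i\bar k_\ell x}$ by Section~4.2.2, is single-valued and meromorphic in $s$.'' This is circular: those exponentials are $s^{2ik_\ell/\lambda}$ and $s^{2i\bar k_\ell/\lambda}$, and they are single-valued functions of $s$ only when $2ik_\ell/\lambda,\,2i\bar k_\ell/\lambda\in\Zset$, which is exactly what you are trying to prove. Section~4.2.2 by itself gives no single-valuedness in the variable $s$.

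The paper closes this gap not by importing structural information about $N_j$ from the IST, but by comparing the two local forms of the \emph{same} solution at the two singular points. Near $s=0$ the form $v=s^{\rho_j}v_1(s)$ with $v_1$ holomorphic yields $N_j(x)=v\,s^{-\rho_j}=v_1(s)$, so $N_j$ is an honest holomorphic function of $s=e^{\lambda x}$ in a full neighborhood of $s=0$; equivalently, $N_j$ is $2\pi i/\lambda$-periodic in $x$ there. Since $N_j$ solves a linear ODE in $x$ whose coefficients are rational in $e^{\lambda x}$ (hence $2\pi i/\lambda$-periodic), uniqueness of solutions propagates this periodicity, so $N_j$ is a single-valued function of $s$ globally. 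But near $s=\infty$ the local form $v=s^{-\rho_j}v_2(1/s)$ gives $N_j=s^{-2\rho_j}v_2(1/s)$, which cannot be single-valued in $s$ when $\rho_j\notin\Zset$. That two-endpoint comparison is the non-circular contradiction you should substitute for the appeal to Section~4.2.2; once it is in place, the rest of your argument goes through exactly as in the paper.
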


\begin{proof}
Let $q_0(s),r_0(s)$ be rational functions of $s$
 such that $q(x)=q_0(e^{\lambda x})$ and $r(x)=r_0(e^{\lambda x})$
 with $q_0(0),r_0(0)=0$ and $q_0(s),r_0(s)\to 0$ as $s\to\infty$.
The existence of such rational functions is guaranteed by Lemma~\ref{lem:5c}.
Using the transformation $s=e^{\lambda x}$, we rewrite \eqref{eqn:ZS2} as
\begin{equation}
v_s=\frac{1}{\lambda s}
\begin{pmatrix}
-ik & q_0(s)\\
r_0(s) & ik
\end{pmatrix}v.
\label{eqn:lem5d}
\end{equation}
Equation~\eqref{eqn:lem5d} is a linear system of differential equations over $\Cset(s)$
 and has regular singularities at $s=0$ and $\infty$.
 
Assume that $\rho_j=ik_j/\lambda$ is not an integer.
Noting that $\Im k_j>0$ and using Theorem~6 in Chapter~2 of \cite{B00},
 we see that the Jost solution $v=\psi(x;k_j)=N_j(x)e^{ik_j x}$ to \eqref{eqn:ZS2}
 corresponds to a solution to \eqref{eqn:lem5d} converge to $w=0$ as $s\to 0$ and $\infty$,
 and has the forms
\begin{equation}
v=s^{\rho_j}v_1(s)
\label{eqn:lem5d1}
\end{equation}
near $s=0$ and
\begin{equation}
v=s^{-\rho_j}v_2(1/s)
\label{eqn:lem5d2}
\end{equation}
near $s=\infty$, where $v_\ell(s)$, $\ell=1,2$, are vectors
 whose components are holomorphic functions of $s$.
This yields a contradiction
 since if it has the form \eqref{eqn:lem5d1} near $s=0$,
 then $N_j(x)$ is a function of $e^{\lambda x}$,
 so that it does not have the form \eqref{eqn:lem5d2} near $s=\infty$.
Thus, for each $j=1,\ldots,n$,
 $\rho_j=ik_j/\lambda$ is an integer and $k_j=in_j\lambda$ for some $n_j\in\Nset$.
Similarly, we can show that for each $j=1,\ldots,\bar{n}$,
 $i\bar{k}_j/\lambda$ is an integer and $\bar{k}_j=-i\bar{n}_j\lambda$ for some $\bar{n}_j\in\Nset$.
This implies that $N_j(x)$, $j=1,\ldots,n$, and $\bar{N}_j(x)$, $j=1,\ldots,\bar{n}$,
 are rational functions of $e^{\lambda x}$.
So the result immediately follows from \eqref{eqn:ZS2sol}.
\end{proof}

\begin{proof}[Proof of Theorem~$\ref{thm:2a}$ for \eqref{eqn:ZS2}]
The first part immediately follows from Lemma~\ref{lem:5c}.
We regard the ZS system \eqref{eqn:ZS1} as a linear system over $\Cset(e^{\lambda x})$.
Since $\hat{\psi}(x;k)$ and $\hat{\psi}(x;-k)$ are linearly independent solutions
 to the linear Schr\"odinger equation \eqref{eqn:ZS1a},
 we see via Lemma~\ref{lem:5b} that the Picard-Vessiot extension of \eqref{eqn:ZS1}
 is an exponential extension of $\Cset(e^{\lambda x})$.
Thus, we obtain the second part by Theorem~\ref{thm:dg}.
\end{proof}

\begin{rmk}
\label{rmk:5b}
If the ZS system \eqref{eqn:ZS2} is regarded as a linear system of differential equations
 over $\Cset(e^{2ik_1x},\ldots,e^{2ik_nx},e^{2i\bar{k}_1x},\ldots,e^{2i\bar{k}_{\bar{n}}x})$,
 then it is always integrable in the meaning of differntial Galois theory,
 although the potential $q(x),r(x)$ may not contain
 all of $e^{2ik_1x},\ldots,e^{2ik_nx},e^{2i\bar{k}_1x},\ldots,e^{2i\bar{k}_{\bar{n}}x}$
 $($cf. Remark~{\rm\ref{rmk:5a})}.
\end{rmk}


\section{Proof of Theorem~\ref{thm:2b}}

In this section we finally prove Theorem~\ref{thm:2b}.
Similar approaches were previously used
 to discuss nonintegrability and chaos
 in two-degree-of-freedom Hamiltonian systems in \cite{MP99,Y03,YY17,YY19}.

We first see that Eq.~\eqref{eqn:ZSpm} has a regular singularity at $s_\pm=0$
 since the matrices $A(s_\pm)$ are holomorphic.
Thus, we regard the ZS systems \eqref{eqn:ZS1} and \eqref{eqn:ZS2}
 as linear ODEs of Fuchs type on the Riemann surface $\hat{\Gamma}$.
Let $M_\pm$ be monodromy matrices of \eqref{eqn:ZSpm} around $s_\pm=0$.
Note that there exists no singularity on  $\hat{q}(U_R)$.
Let $\K=\{k\in\Rset\setminus\{0\}\mid ik(\lambda_+^{-1}-\lambda_-^{-1})\not\in\Zset\}$.
If $\lambda_+^{-1}-\lambda_-^{-1}\notin i\Rset$, then $\K=\Rset\setminus\{0\}$.

\begin{lem}
\label{lem:6a}
The monodromy matrices $M_\pm$ have eigenvalues
 $e^{2\pi k/\lambda_\pm}$ and $e^{-2\pi k/\lambda_\pm}$ for $k\in\K$.
\end{lem}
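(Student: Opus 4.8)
The plan is to deduce everything from the elementary local theory of the regular singular points $s_\pm=0$ of \eqref{eqn:ZSpm}. First I would observe that condition~(A) gives $q_\pm(0),r_\pm(0)=0$, so that
\[
A_\pm(0)=\begin{pmatrix}-ik & 0\\ r_0 & ik\end{pmatrix},\qquad r_0=-1\ \text{or}\ 0,
\]
is triangular with eigenvalues $-ik$ and $ik$ in both the \eqref{eqn:ZS1}-case and the \eqref{eqn:ZS2}-case. Hence the residue matrix $\mp\lambda_\pm^{-1}A_\pm(0)$ of \eqref{eqn:ZSpm} at $s_\pm=0$ has eigenvalues $ik/\lambda_\pm$ and $-ik/\lambda_\pm$; these are the local exponents of \eqref{eqn:ZSpm} at $O_\pm$. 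Since $\Re(1/\lambda_\pm)=\Re\lambda_\pm/|\lambda_\pm|^2>0$ and $k\in\Rset\setminus\{0\}$, the modulus $|e^{2\pi k/\lambda_\pm}|=e^{2\pi k\,\Re\lambda_\pm/|\lambda_\pm|^2}$ is different from $1$, so $e^{2\pi k/\lambda_\pm}$ and $e^{-2\pi k/\lambda_\pm}$, being reciprocals, are distinct; in particular the local monodromy at $O_\pm$ is semisimple.

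The eigenvalues of the monodromy around a regular singular point are $e^{2\pi i\mu}$ as $\mu$ runs over the local exponents, which here gives $e^{2\pi i(\pm ik/\lambda_\pm)}=e^{\mp2\pi k/\lambda_\pm}$, i.e.\ exactly the pair $\{e^{2\pi k/\lambda_\pm},e^{-2\pi k/\lambda_\pm}\}$ claimed. To make this concrete — and in the form the proof of Theorem~\ref{thm:2b} will exploit — I would use the Jost solutions as an adapted basis. By \eqref{eqn:bc}, $\psi(x;k)\sim T\begin{pmatrix}0\\1\end{pmatrix}e^{ikx}$ and $\bar{\psi}(x;k)\sim T\begin{pmatrix}1\\0\end{pmatrix}e^{-ikx}$ as $x\to+\infty$, and in the coordinate $s_+=e^{-\lambda_+x}$ on $V_+$ one has $e^{\mp ikx}=s_+^{\pm ik/\lambda_+}$; so, by the local theory of Fuchsian equations (Theorem~6 in Chapter~2 of \cite{B00}, as already used in the proof of Lemma~\ref{lem:5d}) together with the holomorphy of $q_+,r_+$ at $s_+=0$, the solutions $\psi,\bar{\psi}$ behave like $s_+^{-ik/\lambda_+}$ and $s_+^{ik/\lambda_+}$ to leading order near $s_+=0$. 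Consequently $M_+$, written in the basis $\{\psi,\bar{\psi}\}$, is triangular with diagonal entries $e^{2\pi i(-ik/\lambda_+)}=e^{2\pi k/\lambda_+}$ and $e^{2\pi i(ik/\lambda_+)}=e^{-2\pi k/\lambda_+}$, hence has eigenvalues $e^{\pm2\pi k/\lambda_+}$. The identical computation at $O_-$ with $\phi(x;k),\bar{\phi}(x;k)$ and the coordinate $s_-=e^{\lambda_-x}$, for which $e^{\mp ikx}=s_-^{\mp ik/\lambda_-}$, shows that $M_-$ has eigenvalues $e^{\pm2\pi k/\lambda_-}$.

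The only delicate point is the Fuchsian local analysis at the two singularities: one must justify that the Jost solutions, which are established for $\Re x$ large earlier in this section, really do continue to genuine local solutions in the coordinate $s_\pm$ with leading term of the stated exponent, so that the monodromy factors can be read off; here the simplicity of the exponents and $\Re\lambda_\pm>0$ are what is used, and no interaction between the two singular points is needed for this lemma. The standing hypothesis $k\in\K$, i.e.\ $ik(\lambda_+^{-1}-\lambda_-^{-1})\notin\Zset$, is a non-resonance condition between a local exponent at $O_+$ and one at $O_-$; it is not required for the eigenvalue computation of this lemma, but is recorded here because it will be needed when $M_+$ and $M_-$ are combined through the scattering relations \eqref{eqn:ab1} in the proof of Theorem~\ref{thm:2b}. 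Everything else is the routine residue computation of the first paragraph.
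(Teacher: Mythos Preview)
Your proof is correct and follows essentially the same route as the paper: both compute the eigenvalues of $A_\pm(0)$, pass to the residue matrix $\mp\lambda_\pm^{-1}A_\pm(0)$, and then read off the monodromy eigenvalues as $e^{2\pi i\mu}$ with $\mu$ the local exponents. The paper phrases the non-resonance as ``the difference of the exponents is not an integer'' and then writes the local monodromy explicitly as $\exp(\mp 2\pi i\,\lambda_\pm^{-1}A_\pm(0))$; your version argues the same thing via $|e^{2\pi k/\lambda_\pm}|\neq 1$, and you are right that this already follows from $\Re\lambda_\pm>0$ and $k\in\Rset\setminus\{0\}$, so the extra hypothesis $k\in\K$ is not used in this lemma. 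Your additional paragraph realising $M_\pm$ in the Jost bases is more than the lemma requires but anticipates exactly the computation the paper carries out in Lemma~\ref{lem:6b}.
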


\begin{proof}
Let $k\in\K$.
Since $A_\pm(0)$ have eigenvalues $\pm ik$,
 the characteristic exponents of \eqref{eqn:ZSpm}
 are given by $\mp ik/\lambda_\pm$ and $\pm ik/\lambda_\pm$,
 the difference of which is not an integer.  
Hence, we compute the local monodromy matrices of \eqref{eqn:ZSpm} around $s_\pm=0$ as
\[
\exp\left(\mp\frac{2\pi i}{\lambda_\pm}A_\pm(0)\right),
\]
which have eigenvalues  $e^{2\pi k/\lambda_\pm}$ and $e^{-2\pi k/\lambda_\pm}$.
This means the desired result.
\end{proof}

Let $\Psi(x;k)$ be a fundamental matrix to \eqref{eqn:ZS1} or \eqref{eqn:ZS2}
 for $k\in\Rset\setminus\{0\}$.
Using a standard result about asymptotic behavior of linear ODEs
 (e.g., Section~3.8 of \cite{CL55}), we show that the limits
\[
B_\pm(k)=\lim_{x\to\pm\infty}\Phi(-x;k)\Psi(x;k)
\]
exist and $B_\pm(k)$ are nonsingular (cf. Lemma~3.1 of \cite{Y00}).
Recall that $\Phi(x;k)$ is a fundamental matrix to \eqref{eqn:ZS0} with $\Phi(0)=\id_2$
 and given by \eqref{eqn:Phi}.  
Hence, we have
\[
\Psi(x;k)\sim\Phi(x;k)B_\pm(k)\quad\mbox{as $x\to\pm\infty$}
\]
since $\Phi(x;k)^{-1}=\Phi(-x;k)$.
Letting
\[
B_0(k)=B_+(k)B_-(k)^{-1}\quad\mbox{and}\quad
\Psi_-(x)=\Psi(x)B_-(k)^{-1},
\]
we have
\begin{equation}
\begin{split}
&
\Psi_-(x;k)\sim\Phi(x;k)\quad\mbox{as $x\to-\infty$,}\\
&
\Psi_-(x;k)\sim\Phi(x;k)B_0(k)\quad\mbox{as $x\to+\infty$.}
\end{split}
\label{eqn:ab2}
\end{equation}
So the first and second column vectors of $\Psi_-(x;k)$
 give the Jost solutions $\phi(x;k)$ and $\bar{\phi}(x;k)$, respectively.
Similarly, the first and second column vectors of
\[
\Psi_+(x;k)=\Psi(x;k)B_+(k)^{-1}
\]
give the Jost solutions $\bar{\psi}(x;k)$ and $\psi(x;k)$, respectively.
From \eqref{eqn:ab1} and \eqref{eqn:ab2} we see that
\begin{equation}
B_0(k)=
\begin{pmatrix}
a(k) & \bar{b}(k)\\
b(k) & \bar{a}(k)
\end{pmatrix}.
\label{eqn:B0}
\end{equation}
Especially, $\det B_0(k)=1$ by \eqref{eqn:det}.

\begin{lem}
\label{lem:6b}
Let $k\in\K$.
The monodromy matrices can be expressed as 
\begin{equation}
M_+=B_0^{-1}
\begin{pmatrix}
e^{-2\pi k/\lambda_+} & 0\\
0 & e^{2\pi k/\lambda_+}
\end{pmatrix}B_0,\quad
M_-=
\begin{pmatrix}
e^{2\pi k/\lambda_-} & 0\\
0 & e^{-2\pi k/\lambda_-}
\end{pmatrix}
\label{eqn:lem6b}
\end{equation}
for a common fundamental matrix.
\end{lem}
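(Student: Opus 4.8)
The plan is to construct a single fundamental matrix on the Riemann surface $\hat\Gamma$ that simultaneously diagonalizes the local monodromy at $s_-=0$ and displays the conjugation by $B_0$ at $s_+=0$. First I would work near $x\to-\infty$ (i.e.\ near $O_-$): since the characteristic exponents $\pm ik/\lambda_-$ of \eqref{eqn:ZSpm} differ by a non-integer for $k\in\K$, the local solutions split, and by \eqref{eqn:ab2} the columns of $\Psi_-(x;k)$ are asymptotic to those of $\Phi(x;k)=T\,\mathrm{diag}(e^{-ikx},e^{ikx})\,T^{-1}$ as $x\to-\infty$. Because $A_-(0)$ has eigenvalues $\pm ik$ with eigenvectors matching the columns of $T\,\mathrm{diag}(1,\cdot)$, the asymptotics \eqref{eqn:bc} identify the first and second columns of $\Psi_-$ with the Jost solutions $\phi,\bar\phi$, which behave like $e^{-ikx}$ and $e^{ikx}$ respectively near $O_-$. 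Transporting $\Psi_-$ once around $s_-=0$ multiplies $e^{-ikx}=s_-^{ik/\lambda_-}$ by $e^{2\pi i\cdot ik/\lambda_-}=e^{-2\pi k/\lambda_-}$ and $e^{ikx}=s_-^{-ik/\lambda_-}$ by $e^{2\pi k/\lambda_-}$; hence with respect to the fundamental matrix $\Psi_-$ the monodromy $M_-$ is the stated diagonal matrix. (One must check the sign/orientation of the loop in the $s_-$-chart versus the $x$-chart, using $s_-=e^{-\lambda_- x}$, so that a positively oriented loop around $s_-=0$ corresponds to the correct direction in $x$; this fixes which exponential gets which eigenvalue.)

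Next I would relate the same fundamental matrix $\Psi_-$ to the behavior at $O_+$. By \eqref{eqn:ab2}, $\Psi_-(x;k)\sim\Phi(x;k)B_0(k)$ as $x\to+\infty$, and near $O_+$ the natural diagonalizing solution is $\Phi(x;k)$ itself, whose columns $e^{-ikx}=s_+^{-ik/\lambda_+}$ and $e^{ikx}=s_+^{ik/\lambda_+}$ pick up $e^{2\pi i\cdot(-ik/\lambda_+)}=e^{2\pi k/\lambda_+}$ and $e^{-2\pi k/\lambda_+}$ under a positive loop around $s_+=0$ (note the sign flip relative to the $s_-$ case, because $s_+=e^{\lambda_+ x}$ carries the opposite orientation in $x$). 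Thus in the basis given by $\Phi$, the monodromy around $s_+=0$ is $\mathrm{diag}(e^{2\pi k/\lambda_+},e^{-2\pi k/\lambda_+})$. Since $\Psi_-=\Phi B_0$ asymptotically and $B_0$ is constant, passing to the basis $\Psi_-$ conjugates this by $B_0$: the monodromy of $\Psi_-$ around $s_+=0$ is $B_0^{-1}\,\mathrm{diag}(e^{2\pi k/\lambda_+},e^{-2\pi k/\lambda_+})\,B_0$. Comparing with the claimed $M_+$ in \eqref{eqn:lem6b}, I would need to reconcile which diagonal entry is $e^{2\pi k/\lambda_+}$; the paper writes $M_+=B_0^{-1}\mathrm{diag}(e^{-2\pi k/\lambda_+},e^{2\pi k/\lambda_+})B_0$, so I would track the column ordering of $\Phi$ (first column $\sim e^{-ikx}$, second $\sim e^{ikx}$) consistently and confirm the placement, adjusting by an obvious permutation if the orientation convention demands it.

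Finally I would assemble the two computations: $\Psi_-$ is a fundamental matrix holomorphic and nonvanishing on $\hat q(U_R)$ (no singularities there by hypothesis), so it continues to a common fundamental matrix on all of $\hat\Gamma\setminus\{O_+,O_-\}$, and the two monodromy matrices computed above are exactly $M_\pm$ for this common choice, proving \eqref{eqn:lem6b}. The main obstacle I anticipate is bookkeeping the orientations: the charts $s_\pm=e^{\mp\lambda_\pm x}$ traverse $x$ in opposite senses, so a standard positively-oriented loop encircling $O_+$ corresponds to a loop of opposite orientation encircling $O_-$ in the $x$-plane, and getting the exponents $e^{\pm 2\pi k/\lambda_\pm}$ placed in the right diagonal slots (and matching the Jost-solution identification of the columns of $B_0$ in \eqref{eqn:B0}) is where a sign error would most easily creep in. Everything else — existence of the limits $B_\pm$, nonsingularity, the splitting of local solutions for non-integer exponent difference — is supplied by the cited asymptotic theory and the discussion already in this section.
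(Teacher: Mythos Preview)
Your argument is essentially the paper's: take $\Psi_-$ (which the paper calls $\tilde\Psi=\Psi B_-^{-1}$) as the common fundamental matrix, read off a diagonal $M_-$ from the asymptotic $\Psi_-\sim\Phi$ at $O_-$, and obtain the $B_0$-conjugated $M_+$ from $\Psi_-\sim\Phi B_0$ at $O_+$, with base point chosen near $O_-$. Note only that the paper's charts are $s_\pm=e^{\mp\lambda_\pm x}$, so in particular $s_-=e^{+\lambda_- x}$ and $s_+=e^{-\lambda_+ x}$ (the reverse of what you wrote); correcting this flips your diagonal entries back to the stated form, and is exactly the orientation bookkeeping you already flagged as the place to be careful.
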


\begin{proof}
Let $\tilde{\Psi}(x;k)=\Psi(x;k)B_-(k)^{-1}$.
Then $\tilde{\Psi}(x;k)$ is also a fundamental matrix to \eqref{eqn:ZS1} or \eqref{eqn:ZS2}
 such that
\[
\lim_{x\to-\infty}\Phi(-x)\tilde{\Psi}(x;k)=\id_2,\quad
\lim_{x\to-\infty}\Phi(-x)\tilde{\Psi}(x;k)=B_0.
\]
Consider the transformed ZS system consisting of \eqref{eqn:ZSU} and \eqref{eqn:ZSpm}
 on $\hat{\Gamma}$, and take a fundamental matrix corresponding to $\tilde{\Psi}(x;k)$.
Since its analytic continuation yields the (local) monodrmy matrices
\[
\begin{pmatrix}
e^{\mp 2\pi k/\lambda_\pm} & 0\\
0 & e^{\mp 2\pi k/\lambda_\pm}
\end{pmatrix}
\]
along small loops around $O_\pm$,
 which is estimated from asymptotic expressions
\[
T^{-1}\Phi\biggl(\mp\frac{1}{\lambda_\pm}\log s_\pm;k\biggr)T
\]
of its fundamental matrices,
 we choose the base point near $O_-$ to obtain the desired result.
\end{proof}

\begin{proof}[Proof of Theorem~$\ref{thm:2b}$]
Let $\M$ denote the monodromy group generated by $M_\pm$.
Assume that the hypothesis of Theorem~\ref{thm:2b} holds and $k\in\K$.
Then we have the following.

\begin{lem}
\label{lem:6c}
The monodromy group $\M$ is triangularizable.
\end{lem}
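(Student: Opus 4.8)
The plan is to exploit the hypothesis that the ZS system is integrable in the meaning of differential Galois theory, together with the Fuchsian structure on $\hat{\Gamma}$, to force the differential Galois group $G=\mathrm{Gal}(\Lset/\Kset)$ to have solvable connected identity component, and then to transfer this to the monodromy group $\M$. Since the ZS systems have traceless coefficient matrices, $G\subset\SL(2,\Cset)$, and by the Schlessinger Theorem (Theorem~\ref{thm:sl}) $G$ is the Zariski closure of $\M$. Integrability means $G^0$ is solvable, so by Proposition~\ref{prop:3a} the group $G$ must be of one of the types (i)--(v): in every such case $G^0$ is triangularizable (contained in lower triangular matrices), and in fact $G$ itself is triangularizable in types (i), (ii), (iii), (v), while type (iv) has $G^0$ triangular but $G$ only normalizing the torus. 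So I would first rule out type (iv) for $k\in\K$, and then conclude that $\M\subset G$ is triangularizable.

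The concrete tool is Lemma~\ref{lem:6b}: with respect to a common fundamental matrix, $M_-$ is diagonal with eigenvalues $e^{\pm 2\pi k/\lambda_-}$, and $M_+=B_0^{-1}\,\mathrm{diag}(e^{-2\pi k/\lambda_+},e^{2\pi k/\lambda_+})\,B_0$. For $k\in\Rset\setminus\{0\}$ these eigenvalues are real and positive and distinct from $1$, so both $M_\pm$ are diagonalizable with eigenvalues not equal to $\pm 1$; in particular neither is $\pm\id_2$. Now I would argue by cases through Proposition~\ref{prop:3a}. In type (i) $G$ is finite, which is impossible since $M_-$ has infinite order (its eigenvalue $e^{2\pi k/\lambda_-}\neq$ a root of unity as $k/\lambda_\pm\in\Rset$). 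In type (iv), every element outside $G^0$ is anti-diagonal, hence has eigenvalues $\pm\sqrt{-1}$ or trace $0$; but $M_-$ is diagonal with trace $e^{2\pi k/\lambda_-}+e^{-2\pi k/\lambda_-}\neq 0$, so $M_-\in G^0$, i.e.\ $M_-$ lies in the diagonal torus in the basis of Proposition~\ref{prop:3a}; one then checks $M_+$ must also be in $G^0$ (else $M_+$ anti-diagonal, trace $0$, again contradicting its eigenvalues), so $\M\subset G^0$ which is a diagonal group, hence triangularizable. In the remaining types (ii), (iii), (v), the whole group $G$ is already lower-triangular up to conjugation, so $\M\subset G$ is triangularizable. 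Collecting the cases gives the claim.

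The main obstacle I anticipate is the bookkeeping with conjugation: Lemma~\ref{lem:6b} presents $M_\pm$ in one fixed basis, while Proposition~\ref{prop:3a} classifies $G$ up to a \emph{possibly different} conjugation; one must be careful that ``triangularizable'' is a conjugation-invariant property of the abstract group $\M$, so it suffices to show $\M$ is conjugate into the triangular group, regardless of which basis realizes the normal form of $G$. A second, related subtlety is type (iv): one has to verify that $\M$ cannot contain an element conjugate to an anti-diagonal matrix, which uses that both monodromy generators have nonzero trace for $k\in\K\subset\Rset\setminus\{0\}$ — this is exactly where the restriction to real $k$ and to $\K$ (ensuring the characteristic exponent differences are non-integral so that Lemma~\ref{lem:6a} applies) is used. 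Once these points are handled, the conclusion that $\M$ is triangularizable is immediate from the finitely many cases of Proposition~\ref{prop:3a}.
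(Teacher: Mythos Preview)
Your approach is essentially the paper's: use Schlessinger's theorem to identify $G=\mathrm{Gal}(\Lset/\Kset)$ with the Zariski closure of $\M$, rule out type~(vi) by the integrability hypothesis, and then use the eigenvalues of $M_\pm$ from Lemma~\ref{lem:6b} to eliminate the non-triangular possibilities among (i)--(v). The paper's case analysis is slightly terser---it discards (i), (ii) and (iv) simultaneously by the single observation that the eigenvalues $e^{\pm 2\pi k/\lambda_\pm}$ are not roots of unity (for (iv) this forces both generators into the diagonal torus $G^0$, so $\overline{\M}\subset G^0$ and type~(iv) cannot occur)---whereas you keep (ii) in the ``already triangular'' bucket and treat (iv) separately via the trace; both routes are fine.

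One genuine correction is needed in your justification. You write that the eigenvalues $e^{\pm 2\pi k/\lambda_\pm}$ are ``real and positive'' and that ``$k/\lambda_\pm\in\Rset$'', but condition~(A) only gives $\lambda_\pm\in\Cset$ with $\Re\lambda_\pm>0$; in general $k/\lambda_\pm$ is complex. The correct argument (and the one the paper uses) is that $\Re\lambda_\pm>0$ means $\lambda_\pm$ is not purely imaginary, hence for $k\in\Rset\setminus\{0\}$ the quotient $k/\lambda_\pm$ is not purely imaginary either; consequently $e^{2\pi k/\lambda_\pm}$ is never a root of unity and $e^{2\pi k/\lambda_\pm}+e^{-2\pi k/\lambda_\pm}\neq 0$. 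With this fix your elimination of types~(i) and~(iv) goes through, and the rest of your argument is correct.
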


\begin{proof}
From Theorem~\ref{thm:sl} we first notice that
 $\M$ has the same classifications as stated in Proposition~\ref{prop:3a}.
So $\M$ is not an algebraic group of type (vi) in Proposition~\ref{prop:3a} obviously.
On the other hand, by Lemma~\ref{lem:6b}
 the eigenvalues of $M_\pm$ are not roots of $1$
 since $\lambda_\pm$ are not purely imaginary.
Hence, neither case (i), (ii) nor (iv) occurs for $\M$.
Thus, the monodromy group $\M$ is of type (iii) or (v).
\end{proof}

Theorem~\ref{thm:2b} is now easily proved.
Substituting \eqref{eqn:B0} into the first equation of \eqref{eqn:lem6b}, we have
\[
\begin{pmatrix}
a(k)\bar{a}(k)e_--b(k)\bar{b}(k)e_+
 & \bar{a}(k)\bar{b}(k)(e_--e_+)\\
 a(k)b(k)(e_+-e_-)
 & a(k)\bar{a}(k)e_+-b(k)\bar{b}(k)e_-
\end{pmatrix},
\]
where $e_\pm=e^{\pm 2\pi k/\lambda_+}$.
Hence, if the monodromy group $\M$ is triangularizable, then
\[
a(k)b(k)=0\quad\mbox{or}\quad
\bar{a}(k)\bar{b}(k)=0.
\]
Since by \eqref{eqn:ab3} $a_j(k)$, $j=1,2$, only have discrete zeros,
 we have $b(k)=0$ or $\bar{b}(k)=0$ for any $k\in\Rset\setminus\{0\}$
 by the identity theorem (e.g., Theorem~3.2.6 of \cite{AF03}).
This complete the proof by Theorem~\ref{thm:sl}. 
\end{proof}


\appendix

\renewcommand{\theequation}{A.\arabic{equation}}
\setcounter{equation}{0}

\section{Relations on scattering and reflection coefficients
 between \eqref{eqn:ZS1} and \eqref{eqn:ZS1a}}

Following Section~3d of \cite{N85} basically,
 we define the scattering and reflection coefficients for \eqref{eqn:ZS1a}
 (see also Section~9.1 of \cite{A11}).
Equation~\eqref{eqn:ZS1a} has the Jost solutions
\begin{equation}
\begin{split}
&
\hat{\phi}(x;k)\sim e^{-ikx}\quad\mbox{as $x\to-\infty$,}\\
&
\hat{\psi}(x;k)\sim e^{ikx}\quad\mbox{as $x\to+\infty$.}
\end{split}
\label{eqn:bc1a}
\end{equation}
We easily see that
\begin{equation*}
\begin{split}
&
\hat{\phi}(x;-k)\sim e^{ikx}\quad\mbox{as $x\to-\infty$,}\\
&
\hat{\psi}(x;-k)\sim e^{-ikx}\quad\mbox{as $x\to+\infty$.}
\end{split}
\end{equation*}
Hence, we have the relations
\begin{equation}
\begin{split}
&
\phi(x;k)=-\frac{i}{2k}\begin{pmatrix}
-\hat{\phi}_x(x;k)+ik \hat{\phi}(x;k)\\
\hat{\phi}_(x;k)
\end{pmatrix},\\
&
\bar{\phi}(x;k)=\begin{pmatrix}
-\hat{\phi}_x(x;-k)+ik \hat{\phi}(x;-k)\\
\hat{\phi}(x;-k)
\end{pmatrix},\\
&
\psi(x;k)=\begin{pmatrix}
-\hat{\psi}_x(x;k)+ik \hat{\psi}(x;k)\\
\hat{\psi}(x;k)
\end{pmatrix},\\
&
\bar{\psi}(x;k)=-\frac{i}{2k}\begin{pmatrix}
-\hat{\psi}_x(x;-k)+ik \hat{\psi}(x;-k)\\
\hat{\psi}(x;-k)
\end{pmatrix}
\end{split}
\label{eqn:a1a}
\end{equation}
between the Jost solutions to \eqref{eqn:ZS1} and \eqref{eqn:ZS1a} by \eqref{eqn:v}.
Define the scattering coefficients
 $\hat{a}(k)$ and $\hat{b}(k)$ for \eqref{eqn:ZS1a} as
\begin{equation}
\hat{\phi}(x;k)=\hat{a}(k)\hat{\psi}(x;-k)+\hat{b}(k)\hat{\psi}(x;k)
\label{eqn:ab1a}
\end{equation}
like \eqref{eqn:ab1}.
Since
\[
\hat{\phi}(x;-k)=\hat{a}(-k)\hat{\psi}(x;k)+\hat{b}(-k)\hat{\psi}(x;-k),
\]
 we have 
\begin{equation}
\hat{a}(k)\hat{a}(-k)-\hat{b}(k)\hat{b}(-k)=1
\label{eqn:a1b}
\end{equation}
like \eqref{eqn:det}.
From \eqref{eqn:a1a} we obtain
\begin{align*}
&
\phi(x;k)=\hat{a}(k)\bar{\psi}(x;k)-\frac{i}{2k}\hat{b}(k)\psi(x;k),\\
&
\bar{\phi}(x;k)=\hat{a}(-k)\psi(x;k)+2ik\hat{b}(-k)\bar{\psi}(x;k),
\end{align*}
which are compared with \eqref{eqn:ab1} to yield
\begin{equation}
a(k)=\hat{a}(k),\quad
\bar{a}(k)=\hat{a}(-k),\quad
b(k)=-\frac{i}{2k}\hat{b}(k),\quad
\bar{b}(k)=2ik\hat{b}(-k).
\label{eqn:a1c}
\end{equation}
Moreover, for the reflection coefficients we have
\begin{equation}
\rho(k)=-\frac{i}{2k}\hat{\rho}(k),\quad
\bar{\rho}(k)=2ik\hat{\rho}(-k),
\label{eqn:a1c}
\end{equation}
where $\hat{\rho}(k)=\hat{b}(k)/\hat{a}(k)$.


\end{document}